\let\csname equation*\endcsname\relax
\let\csname endequation*\endcsname\relax
\newtheorem{theorem}{Theorem}
\newtheorem{lemma}[theorem]{Lemma}
\newcommand\dloc{q} 
\DeclareMathOperator{\hiH}{\mathcal{H}} 
\DeclareMathOperator{\C}{\mathscr{C}}
\DeclareMathOperator{\e}{\mathrm{e}}
\DeclareMathOperator{\iu}{\mathrm{i}}
\DeclareMathOperator{\Tr}{\mathrm{Tr}}
\newcommand{\ket}[1]{\vert #1 \rangle}
\newcommand{\bra}[1]{\langle #1 \vert}
\newcommand{\ketbra}[2]{\vert #1 \rangle\langle #2\vert}
\newcommand{\braket}[2]{\langle #1 \vert #2 \rangle}
\newcommand{\ceil}[1]{\left\lceil #1 \right\rceil}
\newcommand{\floor}[1]{\left\lfloor #1 \right\rfloor}
\newcommand\dH{d_H} 
\newcommand\AME{\mathrm{AME}}
\newcommand{\1}{\ensuremath{\mathbbm{1}}}
\renewcommand{\emph}{\textit}
\DeclareMathOperator{\rank}{rank}
\newcommand\ncl{n_{\text{\normalfont cl}}} 
\newcommand\nq{n_{\text{\normalfont q}}} 
\newcommand\uni {\mbox{-}\mathrm{UNI}}
\newcommand{\unimin}{\mbox{-}\mathrm{UNI}_{\min}}
\newcommand\recl{\ell}
\newcommand\req{\ell'}
\newcommand\Cl{\mathrm{Cl}} 
\newcommand\Q{\mathrm{Q}} 
\begin{document}

\title{Constructions of $k$-uniform and absolutely maximally entangled states beyond maximum distance codes}
 \author{Zahra~Raissi}
 \affiliation{ICFO-Institut de Ciencies Fotoniques, The Barcelona Institute of Science and Technology, 08860 Castelldefels (Barcelona), Spain}
 \author{Adam~Teixid\'{o}}
 \affiliation{ICFO-Institut de Ciencies Fotoniques, The Barcelona Institute of Science and Technology, 08860 Castelldefels (Barcelona), Spain}
 \author{Christian~Gogolin}
 \affiliation{ICFO-Institut de Ciencies Fotoniques, The Barcelona Institute of Science and Technology, 08860 Castelldefels (Barcelona), Spain}
 \affiliation{Institut f\"{u}r Theoretische Physik, Universit\"{a}t zu K\"{o}ln, Z\"{u}lpicher Stra\ss{}e 77, 50937 K\"{o}ln, Germany}
 \affiliation{Xanadu, 372 Richmond St W, Toronto, M5V 1X6, Canada}
 \author{Antonio~Ac\'{i}n}
 \affiliation{ICFO-Institut de Ciencies Fotoniques, The Barcelona Institute of Science and Technology, 08860 Castelldefels (Barcelona), Spain}
 \affiliation{ICREA-Instituci\'o Catalana de Recerca i Estudis Avan\c cats, Lluis Companys 23, Barcelona, 08010, Spain}

\begin{abstract}
Pure multipartite quantum states of $n$ parties and local dimension $\dloc$ are called $k$-uniform if all reductions to $k$ parties are maximally mixed. 
These states are relevant for our understanding of multipartite entanglement, quantum information protocols and the construction of quantum error correction codes. 
To our knowledge, the only known systematic construction of these quantum states is based on classical error correction codes. 
We present a systematic method to construct other examples of $k$-uniform states and show that the states derived through our construction are not equivalent to any $k$-uniform state constructed from the so-called maximum distance separable error correction codes.
Furthermore, we use our method to construct several examples of absolutely maximally entangled states whose existence was open so far.
\end{abstract}

\maketitle

\section{Introduction}
Multipartite entangled states play an important role in many quantum information processing tasks, like quantum secret sharing, quantum error correcting codes, and also in the context of high energy physics \cite{Hillery,Gottesman-thesis,Scott2004,Gottesman,Latorre,Preskill}.
All of these processes and applications depend on the property of the multipartite entangled states that are used as a resource. Providing a general framework for multipartite entanglement represents a highly complex problem, probably out of reach. Therefore, many efforts have focused on the study of relevant sets of states such as, for instance, graph states~\cite{Hein,Hein-2006} or tensor network states~\cite{Orus}.

Recently, a special class of states have attracted the attention for a wide range of tasks.
These states are called $k$-uniform states (or for simplicity $k \uni$ states), and they have the property that all of their reductions to $k$ parties are maximally mixed. An $n$-qudit state $\ket{\psi}$ in $\hiH(n,\dloc)\coloneqq \mathbb{C}_\dloc^{\otimes n}$ is $k$-uniform, and denoted in what follows by $k \uni (n,\dloc)$, whenever
\begin{equation}
\rho_S = \Tr_{S^c} \ketbra\psi\psi \propto \1 \qquad \forall S \subset \{1,\ldots,n\}, |S| \leq k \ ,
\end{equation}
where $S^c$ denotes the complementary set of $S$.
The Schmidt decomposition implies that a state can be at most $\floor{n/2} \uni$, i.e., $k\leq \floor{n/2}$. 
Operationally, in a $k \uni$ state any subset of at most $k$ parties is maximally entangled with the rest.
The $\floor{n/2} \uni$ states are called Absolutely Maximally Entangled (AME) because they are maximally
entangled along any splitting of the $n$ parties into two groups. Similarly, we denote an AME state 
in $\hiH(n,\dloc)$ by $\AME (n,\dloc)$. 

Despite their natural definition, little is known about the properties of  $k \uni$ states, such as for which
values of the tuple $(k,n,q)$ they exist or systematic methods for their construction.  
In~\cite{Dardo-Karol,Arnau,Dardo} these states were related to some classes of combinatorial designs known as orthogonal arrays (OA), and their quantum counterpart, quantum orthogonal arrays (QOA). To our knowledge, the  
most general method to construct $k \uni$ states is based on a connection between them and a family of
classical error correcting codes known as maximum distance separable (MDS)~\cite{Helwig-graphstates,ZCAA}. 
The resulting states are called of \emph{minimal support}, as they can be expressed with the minimum number of product terms
needed to guarantee that the reduced states are maximally mixed.

In this work, we introduce a systematic method of constructing $k \uni$ states. We call this method Cl+Q because it combines 
a given classical MDS code with a basis made of $k\uni$ quantum states.
We prove that our method is different from previous constructions as the derived states may not be 
of minimal support. 
In fact, we show that our states cannot be obtained from any state of minimal support by stochastic local operations and classical communication (SLOCC). 
We also use our method to construct $k \uni$ states with smaller local dimension $\dloc$ 
compared to the same $k \uni$ state constructed from MDS codes. We then show how the $k\uni$ states
derived through our construction are example of graph states and provide the corresponding graph, which 
is different from the graphs associated to states of minimal support.
Finally, we present generalizations of the Cl+Q method and use them to construct two examples of AME states whose
existence was open so far,  
namely $\AME(19,17)$ and $\AME(21,19)$.

\section{MDS codes and $k\uni$ states}
The first ingredient in our construction are classical MDS codes. 
In the language of coding theory, linear error correcting codes are usually specified by the tuple of integer numbers $[n ,k,\dH ]_\dloc$ and defined over a finite field $GF(\dloc)$.
Such codes encode $\dloc^k$ many messages specified by vectors $\vec{v}_i\in [\dloc]^k$, with $i=1,\ldots,q^k$, into a subset of codewords $\vec{c}_i\in [\dloc]^n$, all having Hamming distance $\dH$ \cite[Chapter~1]{MacWilliams}. Here $[\dloc]\coloneqq (0,\dots ,\dloc -1)$ denotes the range from $0$ to $\dloc -1$ and the Hamming distance $\dH$ between two codewords $\vec{c}_i=(c^{(i)}_1, \dots ,c^{(i)}_n)$ and $\vec{c}_j=(c^{(j)}_1, \dots ,c^{(j)}_n)$ is the number of places where they differ.
The Singleton bound~\cite{Singleton} states that for any linear code
 \begin{equation} \label{eq:hammingdistance}
  \dH \leq n -k+1\ .
 \end{equation}
A code that achieves the maximum possible minimum Hamming distance for given length and dimension is called MDS code \cite[Chapter~11]{MacWilliams}. Next, we specify MDS codes by the tuple $[n,k]_q$, as the Hamming distance follows from the saturation of the Singleton bound. 
Finally, given an $[\ncl,\recl]_q$ MDS code, it is possible to define its dual, which is an $[\ncl,\ncl-\recl]_q$ MDS code (see Appendix~\ref{Appendix-code-dual} for details on $k \unimin$ states and the number of terms they have, in expanded in the computational basis). 
In what follows, we take initial MDS codes with $\recl\leq n/2$ so that the number of codewords in the dual is $\ncl-\recl>n/2$.
 
MDS codes have been used to derive the only known systematic construction of $k\uni$  states~\cite{Scott2004,Helwig-graphstates,ZCAA}, which are also of minimal support, denoted by $k \unimin (n, \dloc)$. For a given MDS code, consider the pure quantum state corresponding to the equally weighted superposition of all the codewords $\vec{c}_i$ of the code
, i.e.,
 \begin{equation}\label{eq:kunimin}
   \ket{\psi}=
   \sum_{i=1,\ldots, q^k}\ket{\vec{c}_i} \, ,
 \end{equation}
It is instructive for what follows to see why \eqref{eq:kunimin} is a $k\uni$ state, that it, to show why all reductions up to $k$ parties are maximally mixed (more details in Appendix~\ref{Appendix-code-dual}).
For that we use two properties of MDS codes. First, since all codewords have a distance at least equal to the Singleton bound \eqref{eq:hammingdistance}, all the off-diagonal elements of the reduced density matrices of at most $k$ parties are zero. What remains to be proven is that all the diagonal elements of the reduced state of $k$ parties are equal. But this follows from the fact that any MDS code has a systematic encoder in which any set of symbols of length $k$ of the codewords can be taken as message symbols \cite[Chapter~11]{MacWilliams}, that is, all the $q^k$ possible combinations of messages appear.
Moreover, the obtained $k\uni$  states are of minimal support. This refers to the minimal number of product states needed to specify the state. For $k\uni$ states, since the reduced state of $k$ parties must be proportional to the identity, and hence of full rank, this number has to be at least equal to $\dloc ^k$, which is precisely the number of 
terms in \eqref{eq:kunimin}. 
Finally, let us recall that MDS codes over finite fields $GF(\dloc)$ have been found for the following intervals
 \begin{equation}\label{eq:existence-bound-MDS}
  \begin{cases}
    n \geq 2 \qquad \ \ \ \ \ \text {$k=1$ or $n-1$ } \\
    n \leq \dloc +2 \qquad \text{$\dloc$ is even and $k=3$ or $\dloc -1$} \\
    n \leq \dloc +1 \qquad \text{all other cases}
  \end{cases} ,
 \end{equation}
which in turn defines an existence interval of $k \unimin$ states, i.e., $k \leq \floor{n/2}$ (see \cite[Chapter~11]{MacWilliams}, \cite{Roth}).

\section{Orthonormal basis} 
The second ingredient we used in our construction are orthonormal bases where all the elements are $k\uni$ states. In principle, the $k\uni$ states in the basis can be arbitrary but in what follows we show how to construct examples of such bases starting from a $k \unimin$ state built from an $[n,k]_q$ MDS code. Let us first introduce the unitary operators $X$ and $Z$ that generalize the Pauli operators to Hilbert spaces of arbitrary dimension $\dloc \geq 2$,
\begin{align} 
  X\ket{j}&=\ket{j+1 \mod \dloc} \label{eq:defX}\\
  Z\ket{j}&=\omega^j\,\ket{j}\, , \label{eq:defZ}
\end{align}
where $\omega \coloneqq \e^{\iu 2 \pi/\dloc}$ is the $\dloc$-th root of unity. 
$X$ and $Z$ are unitary, traceless, and they satisfy the conditions $X^\dloc = Z^\dloc = \1$. We now consider operators acting on $\hiH(n,\dloc)$ consisting of tensor products of powers of these operators. In particular, we focus on the operators $M(\vec v)$ labelled by $\vec v \in [\dloc^n]$, that have the form
\begin{equation}\label{eq:Moperatordef}
  M(\vec{v}) 
   \coloneqq \underbrace{Z^{v_1}\otimes \dots \otimes Z^{v_k}}_{k} \otimes \underbrace{X^{v_{k+1}} \otimes \dots \otimes X^{v_n}}_{n-k}\ .
\end{equation}
As we see next, these $\dloc^n$ unitary operators define a basis when acting on a $k \unimin$ state.

\begin{lemma}\label{lemma:orthonormal-basis}
  Consider a $k \unimin$ state $\ket \psi \in \hiH(n,\dloc)$ and all possible vectors $\vec v_i \in [\dloc^n]$, with $i=1,\ldots,q^n$. Then, the states
$\ket{\psi_{i}} \coloneqq M(\vec{v}_i)\,\ket{\psi}$ form a complete orthonormal basis of $k \unimin$ states.
\end{lemma}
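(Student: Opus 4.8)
The plan is to exploit a dimension count. There are exactly $\dloc^n$ operators $M(\vec v_i)$ and $\dim \hiH(n,\dloc) = \dloc^n$, so it suffices to show that the $\ket{\psi_i}$ are pairwise orthogonal and of equal norm. I would start from
\[
  \braket{\psi_i}{\psi_j} = \bra\psi M(\vec v_i)^\dagger M(\vec v_j)\ket\psi .
\]
Since within each of the $n$ tensor factors of $M(\vec v)$ only powers of a single operator appear ($Z$ in the first $k$ slots, $X$ in the last $n-k$), these powers commute factorwise and, using $X^\dloc = Z^\dloc = \1$, one gets $M(\vec v_i)^\dagger M(\vec v_j) = M(\vec u)$ with $\vec u = (\vec v_j - \vec v_i) \bmod \dloc$, and $\vec u = \vec 0$ iff $i=j$. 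Thus everything reduces to showing $\bra\psi M(\vec u)\ket\psi = 0$ for every $\vec u \neq \vec 0$, the value at $\vec u = \vec 0$ being the common squared norm $\braket\psi\psi = \dloc^k$.

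Next I would split $\vec u = (\vec a,\vec b)$ into its $Z$-part $\vec a \in [\dloc^k]$ and $X$-part $\vec b \in [\dloc^{n-k}]$ and act on a codeword basis state: by \eqref{eq:defX}–\eqref{eq:defZ},
\[
  M(\vec u)\ket{\vec c_i} = \omega^{\sum_{l=1}^{k} a_l c^{(i)}_l}\;\ket{c^{(i)}_1,\dots,c^{(i)}_k,\,c^{(i)}_{k+1}+b_1,\dots,c^{(i)}_n+b_{n-k}} .
\]
Taking the overlap with $\ket\psi = \sum_j \ket{\vec c_j}$, the only surviving terms are the pairs $(i,j)$ with $c^{(j)}_l = c^{(i)}_l$ for $l\le k$ and $c^{(j)}_l = c^{(i)}_l + b_{l-k}$ for $l>k$. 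Here the key input is the MDS property recalled above: the first $k$ positions form an information set, i.e.\ $\vec c_i \mapsto (c^{(i)}_1,\dots,c^{(i)}_k)$ is a bijection from the code onto $[\dloc^k]$. Hence $c^{(j)}_l = c^{(i)}_l$ for all $l\le k$ already forces $j=i$.

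Two cases remain. If $\vec b \neq \vec 0$, the leftover constraint $c^{(i)}_l = c^{(i)}_l + b_{l-k}$ cannot hold, so $\bra\psi M(\vec u)\ket\psi = 0$. If $\vec b = \vec 0$ but $\vec a \neq \vec 0$, then by the information-set bijection
\[
  \bra\psi M(\vec u)\ket\psi = \sum_i \omega^{\sum_{l=1}^{k} a_l c^{(i)}_l} = \sum_{\vec m \in [\dloc^k]} \prod_{l=1}^{k} \omega^{a_l m_l} = \prod_{l=1}^{k}\Bigl(\sum_{m=0}^{\dloc-1}\omega^{a_l m}\Bigr),
\]
and each factor is a complete geometric sum of $\dloc$-th roots of unity, which vanishes whenever $a_l \not\equiv 0 \bmod \dloc$; since $\vec a \neq \vec 0$, at least one factor vanishes. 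In all cases the overlap is zero, and with the dimension count this proves the lemma. The only genuinely delicate point is invoking the information-set property at exactly the right place — that the $Z$-operators in $M(\vec v)$ sit on the $k$ positions that can serve as message symbols — and keeping the two arithmetics straight: the mod-$\dloc$ shifts acting on the qudit registers versus the $GF(\dloc)$ structure of the code, the latter entering only through that bijection.
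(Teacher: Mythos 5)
Your proof is correct, and it reaches the conclusion by a route that differs from the paper's in both key steps, even though the skeleton (reduce everything to showing $\bra\psi M(\vec u)\ket\psi=0$ for $\vec u\neq\vec 0$, plus a dimension count) is the same. To kill a nontrivial $X$-part the paper argues via the minimum distance: a shift confined to the last $n-k$ coordinates cannot map one codeword to another because $\dH=n-k+1>n-k$; you instead invoke the systematic-encoder bijection onto the first $k$ coordinates, which forces the two codewords to coincide and then forces $\vec b=\vec 0$. To kill a nontrivial $Z$-part the paper invokes the $k$-uniformity of $\ket\psi$ already established, turning the expectation value into the trace of a traceless weight-$\leq k$ operator; you instead evaluate the character sum $\sum_{\vec m\in[\dloc]^k}\omega^{\sum_l a_l m_l}$ explicitly. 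Your version is more self-contained (it never uses that $\ket\psi$ is $k\uni$, only the generator-matrix structure $[\1_k|A]$) and it makes cleanly visible the point the paper leaves implicit, namely that the mod-$\dloc$ integer arithmetic of $X$ and $Z$ interacts with the $GF(\dloc)$ code only through the labelling bijection; the paper's version is shorter and generalizes more directly to situations where one only knows that $\ket\psi$ is a $k\unimin$ state with terms of pairwise Hamming distance $\geq n-k+1$, without a specific systematic form. The only item you do not address is the claim that each $\ket{\psi_i}$ is itself a $k\unimin$ state; this is the paper's opening sentence (local unitaries preserve all reduced spectra, and $M(\vec v_i)$ maps product basis states to product basis states, so the support size is unchanged) and should be added for completeness.
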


\begin{proof} 
  First, note that all the $\ket{\psi_{i}}$ are $k \uni$ states, since local unitary operations do not change the entanglement properties of the state $\ket\psi$.
  Then we should just check the orthonormality of the states, i.e., check that
  \begin{equation}
    \bra{\psi} M(\vec{v}_i)^\dagger\, M(\vec {v}_{i'}) \ket{\psi} = \prod_i \delta_{i,i'} \ .
  \end{equation}
  To show this we use the fact that, for any $k \uni$ state $\ket\psi$ constructed from an MDS code $\C=[n,k,\dH = n-k+1]_\dloc$, the Hamming distance between all the terms is at least $\dH = n-k+1$.
  The large Hamming distance between the terms in the superposition of state $\ket\psi$ implies 
  \begin{equation}
    \begin{split}
    &\bra{\psi} M(\vec{v}_i)^\dagger \, M(\vec {v}_{i'}) \ket{\psi} \\
    &= \bra{\psi} M(\vec {v}_Z^{(i)})^\dagger \, M(\vec {v}_Z^{(i')}) \ket{\psi} \prod_{i=k+1}^{n} \delta_{i,i'} \, ,
      \end{split}
  \end{equation}  
  where $M(\vec {v}_Z^{(i)})$ has the $Z$ operators of $M(\vec{v}_i)$ and no $X$ operators.
  Now, by considering the property of having $k \uni$ state, we yield
    \begin{equation}
     \begin{split}
     &\bra{\psi}  M(\vec{v}_i)^\dagger \, M(\vec{v}_{i'}) \ket{\psi} \\
     & =\Tr(M(\vec{v}_Z^{(i)})^\dagger\,M(\vec{v}_Z^{(i')}))\prod_{i=k+1}^{n} \delta_{i,i'} 
    = \prod_{i=1}^{n} \delta_{i,i'}\, .
     \end{split}
    \end{equation}
   Here we also used the fact that the operator $M(\vec{v}_Z^{(i)})^\dagger  \, M(\vec{v}_Z^{(i')})$ has weight at least $k$.
\end{proof}

In~\cite{ZCAA} this result was proven for the particular case of AME states of minimal support, leading to an AME basis.
The above lemma, generalizes the result to any $k \unimin$ states. 

\section{Constructing $k \uni$ states of non-minimal support}
We are now ready to describe our method to construct non-minimal support $k \uni (n,\dloc)$ states using the previous two ingredients.
The main idea is to combine the codeword of a given MDS code with the states of a complete $k\uni$ orthonormal basis, see figure~\ref{fig:nonminimal-support}(a).
These states are examples of QOAs, which determine a generalized quantum combinatorial designs (see \cite{Dardo} for details).

\begin{figure}
 \includegraphics[scale=0.22]{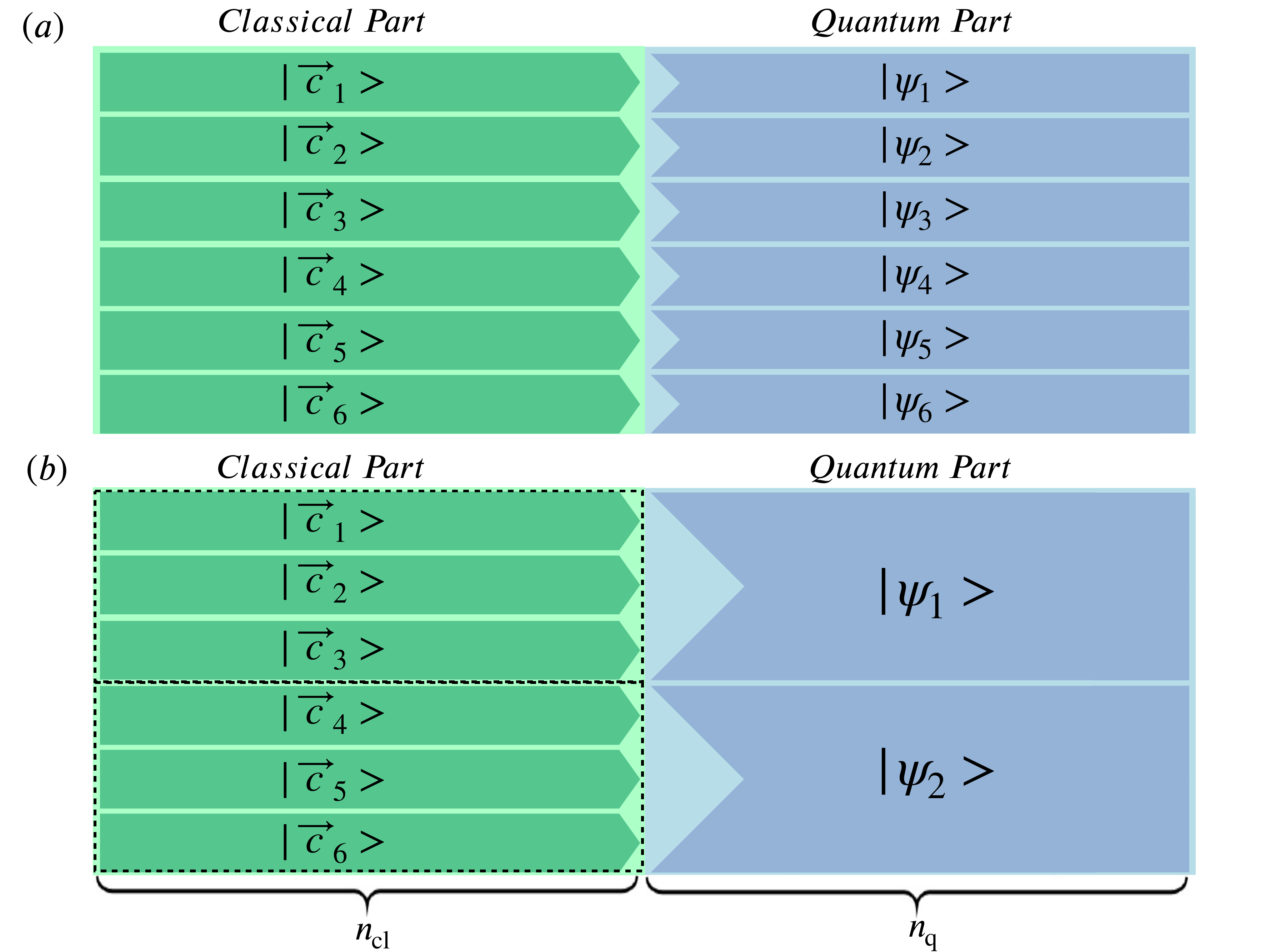}
 \centering
 \caption{\label{fig:nonminimal-support} 
  Methods of constructing $k \uni$ states.
  (a) {\it Cl+Q method.} Constructing $k \uni$ states by concatenating each codeword of an MDS code with a given $\req \uni$ state of an orthonormal basis.
  (b) {\it Cl+Q with repetition.} Constructing AME states by repeating states in the quantum part.
}
\end{figure}

 \begin{lemma}[Cl+Q method] \label{lemma:knonminimal-Cl+Q} 
   Consider an $[\ncl,\recl]_q$ MDS code of codewords $\vec{c}_i$ and a complete $\req \uni(\nq,\dloc)$ orthonormal basis with states  
   $\ket{\psi_i}$ such that $\nq = \recl$.
   Construct the state 
    \begin{equation} \label{eq:knonmin}
   \ket{\phi} =
   \sum_{i=1,\ldots,q^\recl} \underbrace{\ket{\vec{c}_i}}_{\ncl} \, \underbrace{\ket{\psi_i}}_{\nq} \ .
 \end{equation}
This state is a $(\req +1) \uni$ state of $n= \ncl + \nq$ parties.
 \end{lemma}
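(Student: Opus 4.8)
The plan is to label the $n = \ncl + \nq$ sites so that the first $\ncl$ (``classical'') carry the codewords $\ket{\vec c_i}$ and the last $\nq$ (``quantum'') carry the basis states $\ket{\psi_i}$. It suffices to show $\rho_S \propto \1$ for every $S$ with $|S| = \req + 1$, since further reductions of maximally mixed states are maximally mixed. Write $S = S_{\mathrm{cl}} \cup S_{\mathrm q}$ with $a \coloneqq |S_{\mathrm{cl}}|$, $b \coloneqq |S_{\mathrm q}|$, $a + b = \req + 1$, and expand
\begin{equation}
  \rho_S \;=\; \sum_{i,i'} \Tr_{S_{\mathrm{cl}}^c}\!\big(\ketbra{\vec c_i}{\vec c_{i'}}\big) \otimes \Tr_{S_{\mathrm q}^c}\!\big(\ketbra{\psi_i}{\psi_{i'}}\big) ,
\end{equation}
where the complements are taken within the classical and quantum blocks respectively. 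I would then split into three regimes: the purely classical cut $a = \req+1$, the purely quantum cut $b = \req+1$, and the mixed cut $1 \le a,b \le \req$. The point of the split is that a different hypothesis on the MDS code and on the basis is responsible in each.

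For the mixed cut the off-diagonal terms $i \ne i'$ vanish: tracing $\ketbra{\vec c_i}{\vec c_{i'}}$ over $S_{\mathrm{cl}}^c$ forces $\vec c_i$ and $\vec c_{i'}$ to coincide on the $\ncl - a$ coordinates outside $S_{\mathrm{cl}}$, hence to differ in at most $a \le \req$ places, contradicting the MDS minimum distance $\ncl - \recl + 1$ once one uses $\recl \le \ncl/2$ and $\req < \recl$. Thus $\rho_S = \sum_i \ketbra{\vec c_i^{\,S_{\mathrm{cl}}}}{\vec c_i^{\,S_{\mathrm{cl}}}} \otimes \Tr_{S_{\mathrm q}^c}(\ketbra{\psi_i}{\psi_i})$, where $\vec c_i^{\,S_{\mathrm{cl}}}$ is the restriction of the codeword to $S_{\mathrm{cl}}$. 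Now invoke the systematic-encoder property: since $a \le \req + 1 \le \recl$, every $\vec w \in [\dloc]^{a}$ is the $S_{\mathrm{cl}}$-restriction of exactly $\dloc^{\recl - a}$ codewords, so grouping the sum by $\vec w$ gives $\rho_S = \sum_{\vec w} \ketbra{\vec w}{\vec w} \otimes \big( \sum_{i:\,\vec c_i^{\,S_{\mathrm{cl}}} = \vec w} \Tr_{S_{\mathrm q}^c}(\ketbra{\psi_i}{\psi_i}) \big)$; each inner summand equals $\dloc^{-b}\1$ because $b \le \req$ and every $\ket{\psi_i}$ is $\req$-uniform, and there are $\dloc^{\recl-a}$ of them, so $\rho_S = \dloc^{\recl - a - b}\,\1_{a}\otimes\1_{b} \propto \1$. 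The purely classical cut is the same computation but easier: the off-diagonals are killed directly by $\braket{\psi_{i'}}{\psi_i} = \delta_{i,i'}$, and the surviving diagonal sum is handled by the same uniform-cover argument with $b = 0$ (which is where one needs $\req + 1 \le \recl$, i.e. $\req \le \nq/2$).

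The purely quantum cut $b = \req + 1$ is the only place where the basis must be \emph{complete} rather than merely $\req$-uniform, and it is where $\nq = \recl$ is used. There $\braket{\vec c_{i'}}{\vec c_i} = \delta_{i,i'}$ collapses the classical factor, leaving $\rho_S = \sum_i \Tr_{S_{\mathrm q}^c}(\ketbra{\psi_i}{\psi_i})$; since $i$ runs over all $\dloc^{\recl} = \dloc^{\nq}$ elements of a complete orthonormal basis of $\hiH(\nq,\dloc)$ one has $\sum_i \ketbra{\psi_i}{\psi_i} = \1$, hence $\rho_S = \Tr_{S_{\mathrm q}^c}(\1) = \dloc^{\nq - b}\,\1 \propto \1$. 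I expect the main work to be bookkeeping rather than difficulty: presenting the three regimes uniformly and checking precisely where each hypothesis enters — $\recl \le \ncl/2$ and $\req < \recl$ for the off-diagonal vanishing, $\req \le \nq/2$ for $\req+1 \le \recl$ in the classical cut, and $\nq = \recl$ (so that there are exactly $\dloc^{\nq}$ codewords) for the completeness relation in the quantum cut. The two coding-theoretic facts used — uniform cover on any $\le \recl$ coordinates, and minimum distance $\ncl - \recl + 1$ — are exactly those recalled above.
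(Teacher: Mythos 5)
Your proof is correct and follows essentially the same route as the paper's: the same three-way split of $S$ into purely classical, purely quantum, and mixed cuts, with off-diagonal terms killed by the MDS minimum distance (or by orthonormality of the $\ket{\psi_i}$), the diagonal classical sum handled by the systematic-encoder/uniform-cover property, and the quantum factor handled by $\req$-uniformity of the basis elements (resp.\ completeness of the basis in the purely quantum cut). Your explicit grouping of the diagonal sum by the restriction $\vec w = \vec c_i^{\,S_{\mathrm{cl}}}$ and the accounting of exactly where each hypothesis enters are somewhat more careful than the paper's Appendix~B, but the argument is the same.
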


The condition $\nq = \recl$ is needed to ensure that the number of codewords in the code match the number of elements in the basis, as required by the construction. Note that the number of states in the $\req \uni(\nq,\dloc)$ basis is $\dloc ^{\nq}$, while the number of codewords in the MDS code is 
$\dloc^{\recl}$. 
This requirement implies that $\req<\recl$. Actually, the conditions for the lemma are slightly more general, as one can use the dual of an MDS code for the classical part. One then demands that $\nq = \ncl - \recl$ and obtains a $k=\min\{\recl +1, \req +1\} \uni$ state (for more details see Appendix~\ref{Appendix-code-dual}).

 For the purpose of the proof we need to check if the reduced density matrix 
 \begin{equation}\label{eq:reduction-sigma}
  \sigma_S = \Tr_{S^c}  \ketbra{\phi}{\phi}
  =\Tr_{S^c} ( \sum_{i,j} \ketbra{\vec{c}_i}{\vec{c}_j} \otimes \ketbra{\psi_i}{\psi_j} ) \, ,
 \end{equation}
is proportional to the identity for every set $S$ of size $|S|=k$.
In order to do so we consider the three different possibilities for $S$ when the $k$ parties are
(i)~all inside the classical part, 
(ii)~all inside the quantum part 
(iii)~split between the classical and quantum part.

 \begin{proof}[Proof of lemma \ref{lemma:knonminimal-Cl+Q}]
  First, let's consider the case (i): having a complete orthonormal basis in the quantum part ensures orthogonality, i.e., $\braket{\psi_i}{\psi_j}=\delta_{i,j}$ and therefore the off-diagonal elements of $\sigma_S$ are zero.
  In addition, and similar to what happened for the construction of $k\uni$ states from MDS codes, all the diagonal elements are equal because all possible combinations of indices appear. Therefore, $\sigma_S$ is maximally mixed.
   
  Now for the case (ii): the large Hamming distance between the terms of the classical part yields orthogonality, i.e., $\braket{\vec{c}_i}{\vec{c}_j}=\delta_{i,j}$. 
  The fact that the quantum part is a complete basis, for either choices of the classical part, implies that the reduced density matrix is a sum over all states of a basis, i.e., $\sigma_S = \sum_i \ketbra{\psi_i}{\psi_i} \propto \1_{\nq}$.
  
The case (iii) is more involved and its proof can be found in Appendix~\ref{Appendix-kuni-non}, together with more details about the construction.
 \end{proof}
In Table~\ref{tabalmaintext} we provide examples of $k\uni$ states for systems of smaller dimension than those obtained using the existing MDS codes.
\begin{table}
\begin{center}
\begin{tabular}{ |c|cccc|c| } 
 \hline
 uniform & $n$ & Cl part & Basis Q part & Cl+Q & MDS code \\
 \hline \hline
      & $n=5$ &$[3,2]_\dloc$ & Bell, $\dloc^2$ states & $\dloc \geq 2$ & $\dloc \geq 4$ \\ 
      & $n=6$ & $[4,2]_\dloc$ & Bell, $\dloc^2$ states & $\dloc \geq 3$ & $\dloc \geq 4$ \\ 
$k=2$ & $n=7$ & $[5,2]_\dloc$ & Bell, $\dloc^2$ states & $\dloc \geq 4$ & $\dloc \geq 7$ \\ 
      & $n=8$ &  $[5,3]_\dloc$ & GHZ, $\dloc^3$ states & $\dloc \geq 4$ & $\dloc \geq 7$ \\ 
      & $n=9$ & $[6,3]_\dloc$ & GHZ, $\dloc^3$ states & $\dloc \geq 4$ & $\dloc \geq 8$ \\ 
      & $n=10$ & $[7,3]_\dloc$  & GHZ, $\dloc^3$ states & $\dloc \geq 7$ & $\dloc \geq 9$ \\ 
 \hline
      & $n=11$ & $[7,4]_\dloc$ & $AME(4,\dloc)$, $\dloc^4$ states & $\dloc \geq 7$ & $\dloc \geq 11$ \\ 
      & $n=12$ & $[8,4]_\dloc$ & $AME(4,\dloc)$, $\dloc^4$ states & $\dloc \geq 7$ & $\dloc \geq 11$ \\ 
$k=3$ & $n=13$ & $[9,4]_\dloc$ & $AME(4,\dloc)$, $\dloc^4$ states & $\dloc \geq 8$ & $\dloc \geq 13$ \\ 
      & $n=14$ & $[9,5]_\dloc$ & $AME(5,\dloc)$, $\dloc^5$ states & $\dloc \geq 8$ & $\dloc \geq 13$ \\ 
      & $n=15$ & $[10,5]_\dloc$  & $AME(5,\dloc)$, $\dloc^5$ states & $\dloc \geq 9$ & $\dloc \geq 16$ \\ 
      & $n=16$ & $[11,5]_\dloc$ & $AME(5,\dloc)$, $\dloc^5$ states & $\dloc \geq 11$ & $\dloc \geq 16$ \\ 
 \hline
\end{tabular}
\end{center}
 \caption{\label{tabalmaintext} Comparison between the local dimension $\dloc$ of  different $k \unimin$ states using our construction and known MDS codes.}
\end{table}

\section{Inequivalence under SLOCC}
After presenting our construction, we now show that it provides states that could not be obtained using the 
previously known method based on MDS codes. 
In order to do so, we show that states obtained using our construction cannot be obtained by SLOCC from $k\unimin$, 
that is, they belong to different SLOCC classes. 

It is a well-known result that the number of product states needed to specify a pure state is an upper bound to 
the rank of all possible reduced states. For a $k \unimin$ state, this implies that, for any subset $ S\subset \{1,\dots ,n\}$, one has
 \begin{equation}
  \rank (\rho_S) \leq \dloc ^k \ ,
 \end{equation}
where $\rho_S=\Tr_{S^c} \ketbra{\psi}{\psi}$. It is also well known that this number cannot be increased by SLOCC~\cite{rank}.
 
Now consider $k \uni$ state $\ket \phi$ in $\hiH (n,\dloc)$ constructed from Cl+Q method.
All the reductions up to $k$ parties of the state $\ket \phi$ are maximally mixed.
However, it is possible to show that there exists at least one subset of size $|S|= k+1$ parties such that the reduced density matrix $\sigma_S =\Tr_{S^c} \ketbra{\phi}{\phi} \propto \1$.
This specific set contains $k$ parties of the classical part and one party from the quantum part.
This implies that the state $\ket \phi$ is not minimal support and hence the two states $\ket \psi$ and $\ket \phi$ cannot be mapped into the other probabilistically via LOCC. 
Therefore, they belong to different SLOCC classes.

\section{Graph states}
It is also relevant to understand the construction from the point of view of graph states. A graph $G=(V, \Gamma)$ is composed of a set $V$ of $n$ vertices and a set of weighted edges specified by the \emph{adjacency matrix} $\Gamma$ \cite{Nest,Hein,Hein-2006,Bahramgiri}, an $n \times n$ symmetric matrix such that $\Gamma_{i,j}=0$ if vertices $i$ and $j$ are not connected and $\Gamma_{i,j} >0$ otherwise. Graph states are pure quantum states specified by a graph with $\Gamma_{i,j}$. 
In this formalism, qudits are represented by the graph vertices $V$.
The graph state associated with a given graph $G$ is the $+1$ eigenstate of the following set of stabilizer operators  \cite{Nest,Hein,Hein-2006,Bahramgiri}
 \begin{equation}
   S_i = X_i \sum_j (Z_j)^{\Gamma_{i,j}} , \qquad 1 \leq i \leq n \, .
 \end{equation}
 
The $k\unimin$ states derived from MDS codes $[n,k]_\dloc$ are examples of graph states as it is possible to connect the 
adjacency matrix $\Gamma$ and the code parameters \cite{Helwig-graphstates,ZCAA}.
In particular, if one performs local Fourier transforms $F_i = \sum_{i,j} \omega^{i j} \ketbra{i}{j} $ on all the last $n-k$ parties of the state $\ket \psi$ in~\eqref{eq:kunimin}, the resulting state is a graph state corresponding to a \emph{complete bipartite graph}, see Figure \ref{fig:graph-states}(a). 
This graph is partitioned into two subsets, one containing $k$ vertices and the other one $n-k$ vertices.
The weights of the edges connecting the vertices in the two subsets depend on the details of the construction of the MDS code but the  structure is the same for all the states $\ket \psi$~\eqref{eq:kunimin}.
Note that, when $\dloc$ is a power of a prime, discrete Heisenberg-Weyl groups should be considered for the stabiliser formalism \cite{Heisenberg-Weyl, Dani}.

The graph state representation of the states $\ket \phi$ constructed from the Cl+Q method, Eq.~\eqref{eq:knonmin}, when the states in the basis are $k\unimin$ derived from an MDS code, is rather intuitive and shows the structure of the method: it is formed by concatenating  the two complete bipartite graphs associated to each MDS code or, equivalently, the corresponding $k\unimin$ state, as shown in Figure~\ref{fig:graph-states}(b).
All the details of these graph-state representations will be explained elsewhere.

\begin{figure}
 \includegraphics[scale=0.35]{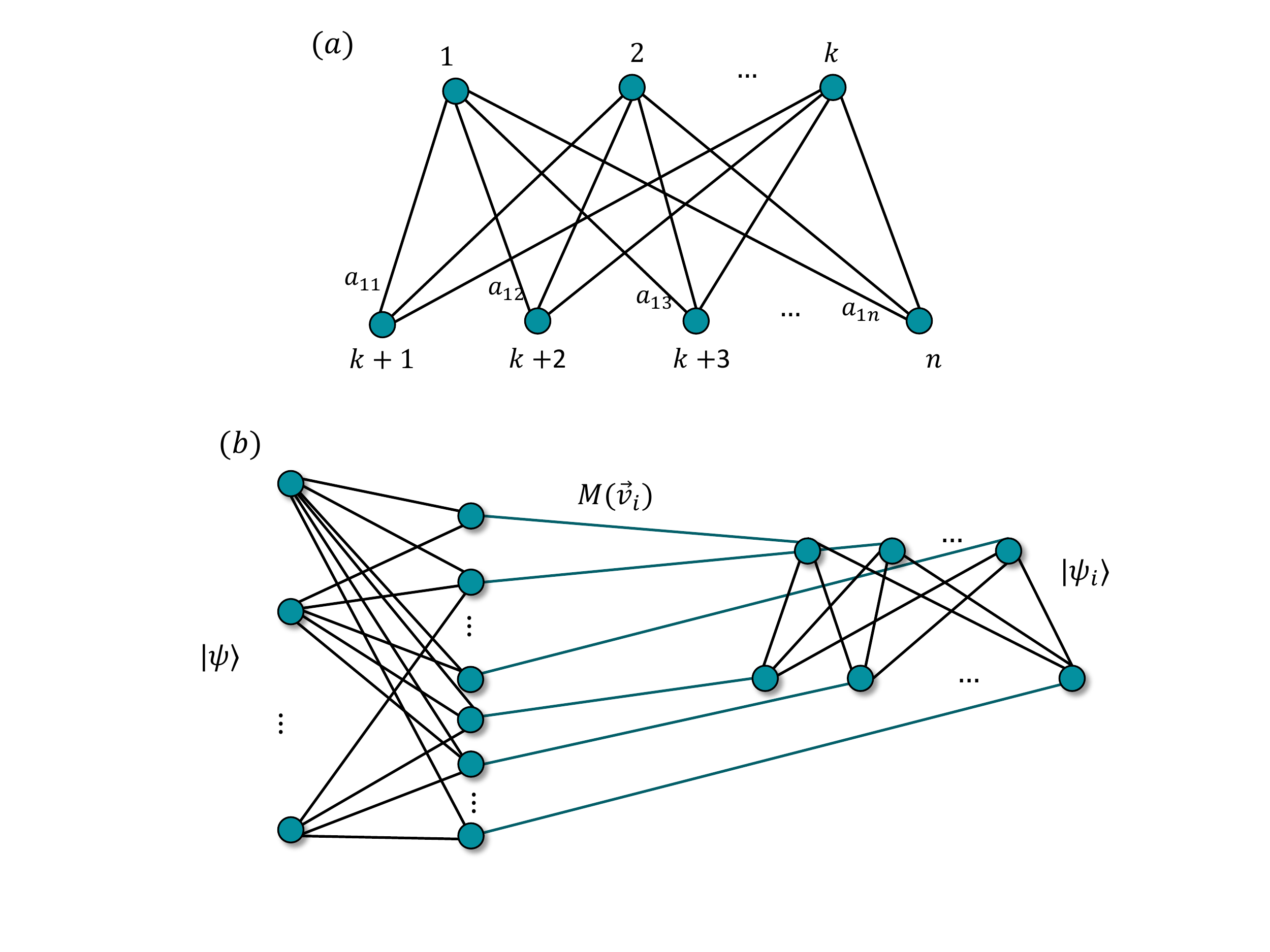}
 \centering
 \caption{\label{fig:graph-states} 
  Graph state representations of $k \uni$ states.
  (a) {\it A complete bipartite graph.} Graph state which is local unitary equivalent to the $k \unimin$ states constructed from MDS codes.
  (b) {\it  Graph state representing the $k \uni$ states constructed from the Cl+Q method.} 
  The graph can be considered as two parts connected as the method.
The left-hand side is the graph state representing the state constructed from $\ket{\psi}=\sum_i \ket{\vec{c_i}}$, i.e., the Cl part. The right-hand side is the graph state representing the Q part, states $\ket{\psi_i}$.
The operators $M(\vec{v}_i)$ describe how the two parts connect.
}
\end{figure}

\section{Constructions of unknown AME states}
We now show how using our method one can construct AME states whose existence was unknown so far.
For that we need to introduce a generalization of the method, which we call \emph{Cl+Q with repetition}, where states in the quantum part are repeated, that is, several codewords of the classical part concatenate to the same quantum state of the quantum part. 
For this to be possible, one should employ MDS codes with the property that the codewords can be distributed into subsets each forming MDS codes with smaller parameters. In particular,  we need MDS codes $\C=[\ncl,\ceil{\frac{\ncl}{2}}]_\dloc$ such that its codewords can be distributed into $\dloc^2$ subsets each forming an MDS code, with parameters $\C_i=[\ncl,\ceil{\frac{\ncl}{2}}-2]_\dloc$.
Comparing the code parameters of the MDS code $\C$ with each subclass $\C_i$, we see that they require the same number of physical qudits but the number of logical qudits decreases by $2$ (while obviously the Hamming distance increases by the same amount).
The idea is now to associate all the elements of each subclass to the same state in the Bell bases, see Figure\ref{fig:nonminimal-support}(b).

 \begin{lemma}[Cl+Q with repetition] \label{lemma:kuni-non-repetition}
  Consider an $\C= [\ncl , \ceil{\frac{\ncl}{2}}]_\dloc$ MDS code such that its codewords can be distributed into $\dloc^2$ subsets each forming MDS code with parameters $\C_i=[\ncl , \ceil{\frac{\ncl}{2}}-2]_\dloc$.
   An $AME(n,\dloc)$ state $\ket \phi$ for $n$ odd, with $n=\ncl +2$,  can be constructed by concatenating all the terms of each subclass with one of the Bell states of the quantum part, see also Figure \ref{fig:nonminimal-support}(b). 
 \end{lemma}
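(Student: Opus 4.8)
The plan is to follow the strategy of Lemma~\ref{lemma:knonminimal-Cl+Q} and check directly that every reduction of $\ket\phi$ to $k\coloneqq\floor{n/2}$ parties is maximally mixed; all smaller reductions then follow by a further partial trace. Since $n$ is odd, $\ncl=n-2$ is odd and $k=\ceil{\ncl/2}$, so $\C=[\ncl,k]_\dloc$ has minimum distance $\dH(\C)=\ncl-k+1=k$, every subcode $\C_a=[\ncl,k-2]_\dloc$ has minimum distance $\dH(\C_a)=k+2$, and $\C$ has $\dloc^k$ codewords partitioned into $\dloc^2$ classes $\C_1,\dots,\C_{\dloc^2}$ of $\dloc^{k-2}$ codewords each. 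Writing $\ket{\psi_1},\dots,\ket{\psi_{\dloc^2}}$ for the Bell basis of the two-qudit quantum register $Q$ and $\vec c\trunc{S}$ for the restriction of a codeword to the coordinates in $S$, the state is $\ket\phi\propto\sum_{a=1}^{\dloc^2}\sum_{\vec c\in\C_a}\ket{\vec c}\ket{\psi_a}$. I would fix $S$ with $|S|=k$ and split into three cases according to $m\coloneqq|S\cap Q|\in\{0,1,2\}$, the analogue of the classical/quantum/split trichotomy used for the plain Cl+Q method.

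For $m=1$ and $m=2$ the off-diagonal terms vanish for an elementary reason: the number of classical coordinates outside $S$ is at least $k$, which exceeds $\ncl-\dH(\C)=k-1$, so two codewords of $\C$ agreeing on all of them must be equal. Hence $\sigma_S=\sum_a\sum_{\vec c\in\C_a}\ketbra{\vec c\trunc{S}}{\vec c\trunc{S}}\otimes\Tr_{Q\setminus S}\ketbra{\psi_a}{\psi_a}$. For $m=2$ nothing is traced out of $Q$; since any $k-2$ coordinates form an information set of the MDS code $\C_a$ one gets $\sum_{\vec c\in\C_a}\ketbra{\vec c\trunc{S}}{\vec c\trunc{S}}\propto\1_{\dloc^{k-2}}$, and completeness of the Bell basis, $\sum_a\ketbra{\psi_a}{\psi_a}=\1_{\dloc^2}$, closes the case. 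For $m=1$ one uses that each Bell state is maximally entangled, so $\Tr_{Q\setminus S}\ketbra{\psi_a}{\psi_a}\propto\1_\dloc$ for every $a$; the classical factor then recombines to a sum over all of $\C$, whose projection onto the $k-1$ classical coordinates of $S$ is uniform (each tuple occurring $\dloc$ times). In both cases $\sigma_S\propto\1_{\dloc^k}$.

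The case $m=0$, in which $S$ sits entirely in the classical register, is the one that genuinely uses the hypothesis that $\C$ splits into MDS subcodes, and I expect it to be the crux of the proof: here the number of classical coordinates outside $S$ is exactly $k-1=\ncl-\dH(\C)$, so the distance of $\C$ alone is too small to kill the off-diagonal terms. The remedy is to trace out $Q$ first; by orthonormality of the Bell basis this restricts the double sum to pairs of codewords lying in one and the same subclass $\C_a$, where the larger distance $\dH(\C_a)=k+2$ takes over: two distinct codewords of $\C_a$ agree on at most $\ncl-(k+2)=k-3<k-1$ coordinates, hence cannot agree on all classical coordinates outside $S$, and only equal codewords remain. (For the smallest admissible parameters each $\C_a$ is a single codeword and this step is vacuous.) One is then left with $\sigma_S=\sum_{\vec c\in\C}\ketbra{\vec c\trunc{S}}{\vec c\trunc{S}}$, which equals $\1_{\dloc^k}$ because $|S|=k=\dim\C$ makes $S$ an information set of $\C$.

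Collecting the three cases shows that every $k$-party reduction of $\ket\phi$ is maximally mixed, so $\ket\phi$ is $k\uni(n,\dloc)$ with $k=\floor{n/2}$, i.e.\ an $\AME(n,\dloc)$ state. I would finish by remarking that $\ket\phi$ is a superposition of exactly $\dloc^k$ orthogonal product terms, the minimum compatible with $k$-uniformity, hence it is itself of minimal support; the point of the lemma is then that such minimal-support AME states can be produced even when the direct MDS construction on $n$ parties is unavailable, which is exactly how the previously open cases $\AME(19,17)$ and $\AME(21,19)$ are obtained.
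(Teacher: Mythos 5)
Your proof is correct and follows essentially the same route as the paper's own (Appendix~\ref{Appendix-kuni-non-repetition}): the same three-case split according to $|S\cap \Q|\in\{0,1,2\}$, with the off-diagonal terms killed by the minimum distance of $\C$ when $|S\cap \Q|\geq 1$, and by the larger distance of the subcodes $\C_i$ (after first tracing out the quantum register and invoking orthonormality of the Bell basis) when $S$ lies entirely in the classical register; the diagonal contributions are then handled via the information-set property of $\C$ and $\C_i$ together with completeness and maximal entanglement of the Bell basis, exactly as in the paper. You also correctly isolate the all-classical case as the only place where the hypothesis that $\C$ splits into MDS subcodes is genuinely needed.

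One correction to your closing remark: the resulting state is \emph{not} of minimal support. The $\dloc^k$ terms $\ket{\vec{c}}\ket{\psi_i}$ are mutually orthogonal, but they are not product states over the $n$ individual parties, because each Bell state is entangled across the two quantum parties; expanded in the computational basis the state has $\dloc^{k+1}$ product terms. In fact a minimal-support $\AME(n,\dloc)$ state would require an $[n,\floor{n/2}]_\dloc$ MDS code, which for instance does not exist for $(n,\dloc)=(7,4)$ --- this impossibility is precisely what makes the construction produce AME states that were previously unknown, so the remark should be dropped rather than weakened.
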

 In general, this configuration leads to AME states for $n$ odd when $n \leq \dloc+3$.
To show that the state $\ket \phi$ is an AME state we need to check all the reduced states $\sigma_S=\Tr_{S^c} \ketbra{\phi}{\phi}$ on up to half of the systems.
For the purpose of the proof,  we proceed as above and check three different cases, depending on how the $k$ parties are distributed between the classical and quantum part. We then use two properties of the construction:, 
(i) the fact that subsets $\C_i$ of the MDS code are also MDS codes and (ii)
the large Hamming distance between codewords of two different subsets $\C_i$ and $\C_j$, see Appendix~\ref{Appendix-kuni-non-repetition} for more details. 

What remains to be shown is that the construction can find an application, that is, that there exist MDS codes that can be distributed into $q^2$ subsets forming MDS codes.
We proved this for MDS codes with parameters $\C = [\ncl , \ceil{\ncl/2}]_\dloc$ where $\ncl \leq \dloc$, whose codewords can be distributed into $q^2$ MDS codes $\C_i =[\ncl , \ceil{\ncl/2}-2]_\dloc$, that the technique is presented in Appendix~\ref{Appendix-MDS-subsets}.
This result then allows us to construct $\AME(n\leq \dloc+2, \dloc )$ states, while $\dloc$ is an odd prime power. 
To our understanding, in some cases, like $\AME (19, 17)$ and $\AME (21, 19)$, the states were not known.
For the simplest case $\dloc =4$ we also provide a closed
form of states $\AME(7, 4)$ \cite{Markus-emails} (details of construction can be found in Appendix~\ref{Appendix-MDS-subsets}).
Table of known $\AME(n,\dloc)$ states for different local dimension $\dloc$ can be found at \cite{Markus-Roetteler,AMEtable,Felix}.

Before concluding this part, we would like to mention that the Cl+Q method can be generalized in a different way where the same 
quantum part is concatenated several times with the classical part. With this method, if $r$ is the number of times that each state of the quantum part concatenates to the terms of the classical part, the $k \uni$ state contains $n =\ncl + r \,  \nq$ many parties. This generalization will be discussed elsewhere.

\section{Conclusion}
We have presented a method that combines a classical error correcting  code with a basis of $k\uni$ states to generate other $k \uni$ states. We have shown that our construction is different from the other systematic construction previously known based on MDS codes: they belong to different SLOCC classes and have different graph-state representations. 
Then, we have used our method  to construct $k \uni$ states of $n$ parties with smaller local dimensions $\dloc$ compared to MDS codes,
and examples of AME states with its closed expression, such as $\AME(19,17)$, $\AME(21,19)$ and $\AME(7,4)$, that were unknown so far. Due to the importance that $k \uni$ and AME states have, it is an interesting avenue to explore how to use the method presented here for quantum information tasks and, in particular, in the context of quantum error correction.

\acknowledgements

We would thank Flavio Baccari, Adam Burchardt, Felix Huber, Barbara Kraus, Swapan Rana, Arnau Riera and Karol \.{Z}yczkowski for discussions and especially Markus Grassl for many useful comments.
We acknowledge support from the Spanish MINECO (QIBEQI FIS2016-80773-P, Severo Ochoa SEV-2015-0522), Fundacio Cellex, Generalitat de Catalunya (SGR 1381 and CERCA Programme), Fundacio Cellex and Mir-Puig, Generalitat de Catalunya (SGR 1381 and CERCA Programme), ERC AdG CERQUTE, the AXA Chair in Quantum Information Science, Fundaci\'{o} Catalunya and La Pedrera.

\appendix

\section{Linear codes and dual codes}\label{Appendix-code-dual}
In general, an error correcting code is denoted by $\left(n ,K,\dH \right)_\dloc$, when it encodes $K$ many messages into a subset of higher dimension $[\dloc]^n$, all having Hamming distance at least $\dH$.
Linear codes are a special class of codes whose set of messages is $K=[\dloc]^k$ for some integer $k$, and the injective map from this set of messages to the $[\dloc]^n$ set of codewords is linear.
Linear codes are usually denoted as $\C = [n,k,\dH]_\dloc$, over a finite field $GF(\dloc)$ (for the reasons of using finite fields see \cite[Chapter~3]{MacWilliams}).
Codewords of a linear code are all possible combination of the rows of a matrix, called a generator matrix $G_{k\times n}$.
For a given vector $\vec{v}_i \in [\dloc]^k$ a codeword can be written as $\vec{c}_i = \vec{v}_i \, G_{k\times n}$.
A generator matrix can always be written in the standard form  
 \begin{equation}
   G_{k \times n} = [\1_k |A]\, , 
 \end{equation}
where $\1_k$ is a $k \times k$ identity matrix and $A \in GF(\dloc)^{k \times (n-k)}$. 

Maximum distance separable (MDS) codes are those linear codes that achieve maximum possible minimum Hamming distance, Eq.\eqref{eq:hammingdistance}.
A $k \unimin$ state $\ket \psi$, Eq.~\eqref{eq:kunimin}, can be constructed by taking superposition of the computational basis states corresponding to all of the codewords.
Using the previous results, this superposition reads
 \begin{equation}\label{eq:kunimin-general}
   \ket{\psi}=
   \sum_i \ket{\vec{c}_i}=
   \sum_{i} \ket{\vec{v}_i \, G_{k  \times n}}
   =\sum_{i} \ket{\vec{v}_i,\, \vec{v}_i A}\, .
 \end{equation}

Given a linear code $\C$ it is always possible to define the \emph{dual code} $\C ^{\perp}$ such that all of its codewords are orthogonal to all the codewords of the initial code $\C$ with respect to the Euclidean inner product of the finite field \cite[Chapter~5]{MacWilliams}.
The dual code $\C^{\perp}$ of any linear MDS code $\C$ is also MDS.
If $\C$ is an MDS code with parameters $\C=[n,k,\dH=n-k+1]_\dloc$, then the dual code has parameters $\C^{\perp}=[n,n-k,\dH^{\perp}=k+1]$ \cite[Chapter~1,11]{MacWilliams}.
To avoid ambiguity we denote the MDS code with message length $k\leq n/2$ by $\C$ and its dual with message length $n-k$ by $\C^{\perp}$.
As above, one can construct the two states $\ket \psi$ and $\ket {\psi^{\perp}}$ by taking the equally weighted superposition of the  codewords of $\C$ and its dual $\C^{\perp}$, respectively.
However, considering the connection between the codewords of the original code and its dual, one can check that the states $\ket \psi$ and $\ket{\psi^{\perp}}$ can be transformed one into the other by local unitary operations, more precisely by applying Fourier gates that map the $Z$-eigenbasis into the $X$-eigenbasis to each party.
Therefore, not only  $\ket \psi$, but also $\ket {\psi^{\perp}}$ is a $k \uni$ state of minimal support.

\section{Proof of Lemma \ref{lemma:knonminimal-Cl+Q} and presenting an example}\label{Appendix-kuni-non}
For the readers convenience we discuss the proof of 
Lemma \ref{lemma:knonminimal-Cl+Q} in more detail.

 \begin{proof}
  For  the classical part in our construction, it is possible to use an MDS code $\C=[\ncl , \recl]_\dloc$ or its dual $\C^{\perp}=[\ncl , \ncl -\recl]_\dloc$.
The resulting states can be written as
   \begin{equation}\label{eq:knonmin-general}
   \begin{split}
     \ket \phi &= \sum_{i} \underbrace{\ket{\vec{c}_i}}_{\ncl} \, \underbrace{\ket{\psi_i}}_{\nq} \\
     & = \sum_{i}  \ket {\vec{v}_i \, G_{k \times n}} \, \ket{\psi_i}
     =\sum_{i} \ket {\vec{v}_i , \, \vec{v}_i A} \, \ket{\psi_i}
     \ ,
     \end{split}
   \end{equation}  
 where as above we denote by $\ket{\phi}$ ($\ket{\phi^\perp}$) the state associated to code $\C$  ($\C^\perp$). 
 The above equation is the generalized form of Eq.~\eqref{eq:knonmin}.   
  The difference between $\ket \phi$ and $\ket {\phi^\perp}$ is in the generator matrix, or alternatively the $A$ matrix. 
  For the state $\ket{\phi^\perp}$ we have $\vec{v_i} \in [\dloc]^{\ncl -\recl}$.
  
 The pure states $\ket{\phi}$ or $\ket{\phi^\perp}$ are $k \uni$ states iff the reduced density matrix $\sigma_S$ of any subset of $k$ parties, $S \subseteq \{1,\dots,n\}$ with $|S|=k$, is maximally mixed.
 This subset may be
 (i) entirely contained  in the support of the classical part $\Cl = \{1,\dots,\ncl\}$; 
 (ii) entirely contained in the support of the quantum part $\Q = \{1,\dots,\nq\}$, 
 (iii) split between the two parts $\Cl \cup \Q = \{1,\dots,n\}$.
 We consider these three different cases separately.

 \begin{itemize} 
  \item[Case (i):] If the $S$ qudits of the reduced density matrix $\sigma_S$ are contained in the classical part, $S \subseteq \Cl$, the reduced density matrix resulting from tracing out all the quantum part and the complement of $S$ in $\Cl$, $S_{\Cl}^c=S^c\cap \Cl$, of the state $\ket\phi$, Eq.~\eqref{eq:knonmin-general}, is
 \begin{equation}\label{eq:reduced-traceout-quantum}
  \begin{split}
  \sigma_S & =\Tr_{S_{\Cl}^c}\Tr_{\Q} \ketbra{\phi}{\phi} \\
  & =\sum_{i,i'} (\Tr_{S_{\Cl}^c} \ketbra{\vec{v}_i, \, \vec{v}_i A}{\vec{v}_{i'}, \, \vec{v}_{i'} \, A} ) \, \braket{\psi_i}{\psi_{i'}} \\    
  & = \sum_{i} \Tr_{S_{\Cl}^c} \ketbra{\vec{v}_i, \, \vec{v}_i A}{\vec{v}_i, \, \vec{v}_i A}\ ,
  \end{split}
 \end{equation}
 which is a direct consequence of having a complete basis in the quantum part, i.e., $\braket{\psi_{i}}{\psi_{i'}}=\delta_{i, i'}$.
 In case of considering the state $\ket{\phi^\perp}$, the same procedure holds when we calculate the reduced density matrix $\sigma_S$ with the same condition for the set $S \subseteq \Cl$. We should just replace $\vec{v}_i \in[\dloc]^{\recl}$ with $\vec{v}_i \in[\dloc]^{\ncl-\recl}$. 
As argued for state \eqref{eq:kunimin}, $\sigma_S$ is proportional to the identity matrix whenever its size is equal number of free indices in the code used in the classical part, equal to $\recl$ for the state $\ket{\phi}$ and $\ncl-\recl$ for $\ket{\phi^\perp}$.

 \item[Case (ii):] If the qudits are all contained the quantum part, $S \subseteq \Q$, the reduced density matrix $\sigma_S$ resulting from tracing out all of the qudits of the classical part and the complement of $S$ in $\Q$, $S_{\Q}^c=S^c\cap \Q$, is
 \begin{equation}
  \begin{split}
  \sigma_S & =\Tr_{\Cl}\Tr_{S_{\Q}^c} \ketbra{\phi}{\phi}\\
  & =\sum_{i,i'} \braket{\vec{v}_i }{\vec{v}_{i'}} \braket{\vec{v}_i A}{\vec{v}_{i'}A}\, (\Tr_{S_{\Q}^c} \ketbra{\psi_{i}}{\psi_{i'}}) \\    
  &=\Tr_{S_{\Q}^c}\sum_{i}\ketbra{\psi_{i}}{\psi_i}\ ,
  \end{split}
 \end{equation}
 where we have used that $\braket{\vec{v}_i}{\vec{v}_{i'}}=\delta_{i,i'}$.
 The quantum part is a complete orthogonal basis, then the reduced density matrix in this case is maximally mixed for any subset $S$ fully contained in the quantum part, which may be of size at most $\nq =\recl$ or $\nq =\ncl -\recl$ depending on the MDS code used for the classical part.
 
\item[Case (iii):] 
  Finally, we consider the case where $S \cap \Cl =S_\Cl \neq S$ and $S \cap \Q=S_\Q \neq S$. 
  We then have the general formula
  \begin{equation}\label{eq:caseIIIreducedstate}
  \begin{split}
    \sigma_S & = \Tr_{S^c}\ketbra{\phi}{\phi} \\
    & = \sum_{i,i'} \Tr_{S_{\Cl}^c} (\ketbra{\vec{v}_i , \, \vec{v}_i A}{\vec{v}_{i'} , \, \vec{v}_{i'} A}) \otimes \Tr_{S_{\Q}^c}(\ketbra{\psi_i}{\psi_{i'}}) .
    \end{split}
  \end{equation}
 We start by the state $\ket\phi$ in which the MDS code used for the classical part has $\recl\leq n_\Cl/2$ and consider the case in which $|S|=\req+1$. We first show that
  \begin{equation}\label{eq:delta}
    \Tr_{S^c_\Cl} (\ketbra{\vec{v}_i , \vec{v}_i A}{\vec{v}_{i'},\vec{v}_{i'} A}) \propto  \delta_{i,i'} ,
  \end{equation}
  for all $S$ with $|S_{\Cl}| \leq \req$.   As the terms $\ket{\vec{v}_i, \vec{v}_i A}$ that make up the classical part of the state $\ket{\phi}$ are coming from an MDS code, they are all product states in, say, the computational basis. 
 Fix any $S$, with $|S_{\Cl}| \leq \req$, and let $\{\ket s\}$ be the computational basis for $S_\Cl$ and $\{\ket t\}$ be that of $S^{c}_\Cl$.
  We can then write
  \begin{align}
  \begin{split}
   &\Tr_{S^c_\Cl} (\ketbra{\vec{v}_i ,\vec{v}_i A}{\vec{v}_{i'},\vec{v}_{i'} A}) \\
  & = \sum_{s,s',t} \ketbra{s}{s'} \braket{s,t}{\vec{v}_i,\vec{v}_i A} \braket{\vec{v}_{i'},\vec{v}_{i'} A}{s',t} \ .
   \end{split}
  \end{align}
  For $\vec{v}_i \neq \vec{v}_{i'}$, the two inner products in the right hand side of the last equation can be simultaneously non-zero only if $\ket{\vec{v}_i,\vec{v}_i A}$ and $\ket{\vec{v}_{i'},\vec{v}_{i'}A}$ are identical in at least $|S^c_\Cl|$ many locations, because otherwise they cannot both be non-orthogonal to $\ket{t}$.
  But this means that their Hamming distance could not be larger than $\dH\leq\ncl - |S^c_\Cl| = |S_\Cl| \leq \req \leq n_q/2=\recl/2$.
 But, at the same time,  we know that the Hamming distance between any two $\ket{\vec{v}_i,\vec{v}_i A}$ and $\ket{\vec{v}_{i'},\vec{v}_{i'} A}$ for $\vec{v}_i \neq \vec{v}_{i'}$ is at least $\dH = \ncl - \recl + 1 \geq \recl +1$, where the inequality follows from $\ncl \geq 2\recl$.
  These were only compatible if $\recl +1 \leq \recl/2$, which is never fulfilled.  We now use~\eqref{eq:delta} into~\eqref{eq:caseIIIreducedstate} to get
    \begin{equation}
    \begin{split}
    &\sigma_S =
    \\
    &  \sum_{i} \Tr_{S_{\Cl}^c} (\ketbra{\vec{v}_i , \, \vec{v}_i A}{\vec{v}_{i} , \, \vec{v}_{i} A}) \otimes \Tr_{S_{\Q}^c}(\ketbra{\psi_i}{\psi_{i}}) .
    \end{split}
  \end{equation}
Any set $S$ of size $\req+1$ with non-zero intersection with the classical and quantum part is such that $|S_\Q|\leq\req$. Therefore, as all the states in the quantum part $\ket{\psi_i}$ are $\req\uni$ states, one has $\Tr_{S_{\Q}^c}(\ketbra{\psi_i}{\psi_{i}})\propto\1,\forall i$. We are therefore left with 
    \begin{equation}
    \sigma_S \propto  \sum_{i} \Tr_{S_{\Cl}^c} (\ketbra{\vec{v}_i , \, \vec{v}_i A}{\vec{v}_{i} , \, \vec{v}_{i} A}) \otimes\1 ,
  \end{equation}
  which is maximally mixed because $|S_\Cl|\leq\req<\recl$.

Let us finally consider the state $\ket{\phi^\perp}$ in which the classical part is constructed from the dual code $\C ^{\perp}$ and the condition $\ncl -\recl =\nq$ is necessary.  
We can now repeat the same analysis as above. To conclude that the terms in the classical part are proportional to $\delta_{i,i'}$ we need that   $\dH^\perp = \recl +1 > |S_\Cl|$, while for the traces in the quantum part to be maximally mixed it is required that $|S_\Q|\leq\req$. These two conditions can be fulfilled if $|S|=\min \{\recl +1, \req+1\}$.
\end{itemize}    
  Now, considering all the three cases, we see that Case (iii) is the most restrictive and implies that our construction leads in general to $\min\{\recl+1,\req+1\} \uni$ states, this minimum being equal to $(\req+1)\uni$ for the state $\ket\phi$.
\end{proof}

Note that the the previous proof also implies that some reduced states $\sigma_S$ in our construction are maximally mixed even for sizes $|S|>\req$.

It just remains to present instances in which the construction applies. 
Recall that the Cl+Q method, requires  an $[\ncl,\recl]_q$ MDS code and a complete $\req \uni(\nq,\dloc)$ orthonormal basis, with $\nq=\recl$ or $\nq=n-\recl$ depending on the MDS code. For the quantum basis, we can employ the direct correspondence between minimal support states and classical MDS codes.
Then, in order to find instances of the Cl+Q method, one can simply check the known conditions for the existence of MDS codes.
To show this we use that according to Eq.~\eqref{eq:existence-bound-MDS}, we should find $\max \{\ncl,\nq\}$ for given local dimension $\dloc$.
Considering this, one simply can verify that $\max\{\ncl,\nq\}=\ncl$. 
Thus the existence of MDS code with $\ncl$ parties and local dimension $\dloc$ is enough to guarantee that such a non-minimal support $k \uni$ state constructs from our method.

As a concrete example, we can consider the state $\AME(5,\dloc)$ with the following closed form expression~\cite{Dardo}
 \begin{equation}
   \ket {\phi^\perp}
   = \sum_{l,m =0}^{\dloc -1} \ket{l,m,l+m} \ket{\psi_{(l,m)}} \ ,
 \end{equation}
where the states $\psi_{(l,m)}$ define a Bell basis
 \begin{equation}
   \ket{\psi_{(l,m)}} =
    X^l \otimes Z^m \sum_{r} \ket{r,r} \ .
 \end{equation}
For the qubit case we have
 \begin{equation}
  \begin{split}
  \ket{\phi^{\perp}}&= \ket {000}\ket{\phi^+} + \ket{011}\ket{\psi^+}\\
  & + \ket{101}\ket{\phi^-} + \ket{110}\ket{\psi-} \ ,
  \end{split}
 \end{equation}
where $\ket{\phi^{\pm}}$ and $\ket{\psi^{\pm}}$ are the Bell basis of the Hilbert space of $2$ qubits.
One can easily check that  all the reduced density matrices $\sigma_S$ up to $2$ parties are maximally mixed.

\section{Proof of Lemma \ref{lemma:kuni-non-repetition}}\label{Appendix-kuni-non-repetition}
Here we discuss how to prove Lemma \ref{lemma:kuni-non-repetition}, which is at the basis of the Cl+Q method with repetition that allows us to construct other examples of AME states.

 \begin{proof}
 In the proof of the theorem, we assume the existence of MDS codes $\C=[\ncl , \ceil{\frac{\ncl}{2}} , \dH= \floor{\frac{\ncl}{2}}+1]_\dloc$ that can be divided into $q^2$ MDS codes with smaller parameters $\C_i = [\ncl , \ceil{\frac{\ncl}{2}}-2 , \dH = \floor{\frac{\ncl}{2}}+3]_\dloc$, where $i=1,\ldots,q^2$. For each code $\C_i$, codewords are presented by $\vec{c}_{i,j}$ with $j=1,\ldots,\dloc^{ \ceil{\frac{\ncl}{2}}-2 }$.
  The state
   \begin{equation}
    \ket \phi = \sum_{i } \sum_{j}
   \underbrace{\ket{\vec{c}_{i,j}}}_{\ncl}  \underbrace{\ket{\psi_{i}}}_{\nq}
   \end{equation}
  is a modification of Eq.~\eqref{eq:knonmin-general}, and it is an AME state if all the reduced density matrices $\sigma_S = \Tr_{S^c} \ketbra{\phi}{\phi}$ are proportional to identity for $|S|\leq \floor{\frac{n}{2}} = \ceil {\frac{\ncl}{2}}$.
  As in the lemma.~\ref{lemma:knonminimal-Cl+Q}, we check three different cases for any subset $S$ of this size: this may be entirely contained in the support of the classical part $\Cl=\{1,\dots , \ncl\}$, or it can be split between the classical and quantum parts, $S_\Q$ and $S_\Cl$. 
  For the last case we have two possibilities, depending on whether the support in the  the quantum part is partial, $|S_\Q| =1$, and then $|S_\Cl| = \floor{\frac{n}{2}} -1$, or or complete, having $|S_\Q| = 2$ and $|S_\Cl| = \floor{\frac{n}{2}} -2$.
  \begin{itemize} 
   \item[Case (i):]
   If the set $S$ contain entirely in the support of the classical part,  the reduced density matrix can be written as
     \begin{equation} \label{appeq:sigmacaseII-rep}
  \begin{split}
  \sigma_S  & =\Tr_{S^c_\Cl}\Tr_{\Q} \ketbra{\phi}{\phi} \\
   & =\sum_{i} \sum_{j,j'} \Tr_{S^c_\Cl} \ketbra{\vec{c}_{i,j}}{\vec{c}_{i,j'}}\,,
  \end{split}
    \end{equation}
 where we used the orthogonality of the states $\ket{\psi_i}$.
  Since the codewords with the same value of $i$ have Hamming distance $\dH \geq \ceil{\frac{\ncl}{2}}+2$, which is larger than the size of the subset $S$, the partial trace is non-zero only when $j=j'$, having
   \begin{equation}
    \sigma_S  
    =\sum_{i,j} (\Tr_{S^c_\Cl} \ketbra{\vec{c}_{i,j}}{\vec{c}_{i,j}} ) \propto \1_{\floor{n/2}} \ .
    \end{equation}    
    where we used the fact that the number of free indices of the classical part is equal to $\ceil{\frac{\ncl}{2}} = \floor{\frac{n}{2}}$.
   \item[Case (ii):]
    The subset $S$ split between two parts such that $|S_\Q| =1$ and $|S_\Cl| = \ceil{\frac{\ncl}{2}}-1 = \floor{\frac{\ncl}{2}}$.
   Then the reduced density matrix $\sigma_S$ simplifies to
   \begin{equation}
   \begin{split}
    \sigma_S & = \Tr_{S^c} \ketbra{\phi}{\phi} \\
           & = \sum_{i,i'} \sum_{j,j'} \Tr_{S^c_\Cl} (\ketbra{\vec{c}_{i,j}}{\vec{c}_{i',j'}}) \otimes \Tr_{S^c_\Q}(\ketbra{\psi_{i}}{\psi_{i'}}) .
  \end{split}
  \end{equation}
   For the classical part, since $|S_\Cl| = \floor{\frac{\ncl}{2}}$ is smaller than the Hamming distance of the code $\C$, $d_H=\floor{\frac{\ncl}{2}}+1$, only the diagonal terms give a non-zero contribution, getting 
    \begin{equation}
    \sigma_S =
    \sum_{i,j} \Tr_{S^c _\Cl} (\ketbra{\vec{c}_{i,j}}{\vec{c}_{i,j}}) \otimes \Tr_{S^c_\Q}(\ketbra{\psi_{i}}{\psi_{i}}) .
  \end{equation}
  The trace over the quantum part gives the identity, as $\ket{\psi_i}$ are all Bell states, getting
      \begin{equation}
    \sigma_S \propto
    \sum_{i,j} \Tr_{S^c _\Cl} (\ketbra{\vec{c}_{i,j}}{\vec{c}_{i,j}}) \otimes \1_q
  \end{equation}
    The remaining sum in the classical part is the same as the reduced state obtained from the superposition of the all codewords of the MDS code $\C$, i.e., $\sum_{i,j} \ket{\vec{c}_{i,j}}$, which is an AME states of $\ncl$ parties and all its reduced density matrices up to $\floor{\frac{\ncl}{2}}$ are maximally mixed.
Putting all this together, we conclude that the reduced density matrix $\sigma_S$ is also maximally mixed.
  \item[Case (iii):] 
  We consider a subset $S$ that $|S_\Q|=|\Q|=2$ and $|S_\Cl|=\ceil{\frac{\ncl}{2}}-2 = \floor{\frac{\ncl}{2}}-1$.
  We then have the following formula
  \begin{equation}\label{appeq:case3-rep}
   \begin{split}
   \sigma_S & = \Tr_{S^c} \ketbra{\phi}{\phi} \\
    & = \sum_{i,i'} \sum_{j,j'} \Tr_{S^c_\Cl} (\ketbra{\vec{c}_{i,j}}{\vec{c}_{i'j'}}) \otimes (\ketbra{\psi_{i}}{\psi_{i'}}) .
 \end{split}
  \end{equation}
  As for case (ii), the Hamming distance between the terms of the classical part, $\dH= \floor{\frac{\ncl}{2}}+1$, is larger than the size of the subset $|S_\Cl|=\floor{\frac{\ncl}{2}}-1$, therefore $\Tr_{S^c_\Cl} (\ketbra{\vec{c}_{i,j}}{\vec{c}_{i'j'}})=0$ whenever $i\neq i'$ and $j\neq j'$ and Eq.~\eqref{appeq:case3-rep} simplifies to
 \begin{equation}\label{eq:caseiiirep}
   \sigma_S
   = \sum_{i} \sum_j \Tr_{S^c_\Cl} (\ketbra{\vec{c}_{i,j}}{\vec{c}_{i,j}}) \otimes (\ketbra{\psi_{i}}{\psi_{i}}) .
  \end{equation}
  As explained, all the codewords with the same value of $i$ define MDS codes with parameters $[\ncl , \ceil{\frac{\ncl}{2}}-2 , \dH = \ceil{\frac{\ncl}{2}}+2]_\dloc$. They all give raise to $\left(\floor{\frac{\ncl}{2}}-1\right) \uni$ states, that is, all the reduced density matrices up to $\floor{\frac{\ncl}{2}}-1$ are proportional to the identity. But the sum over index $j$ in~\eqref{eq:caseiiirep} is precisely equal to one of these reduced states for the set of parties $S_\Cl$, that is
   \begin{equation}
    \sum_{j} \Tr_{S^c_\Cl} (\ketbra{\vec{c}_{i,j}}{\vec{c}_{i,j}}) \propto \1_{\floor{\frac{\ncl}{2}}-1} \ .
   \end{equation}
  Then, we get
   \begin{equation}
    \sigma_S = \sum_{i=1}^{\dloc^2} \1_{\floor{\frac{\ncl}{2}}-1} \otimes (\ketbra{\psi_{i}}{\psi_{i}}) \ .
   \end{equation}
  The quantum part is a complete orthonormal basis, therefore $\sum_{i} \ketbra{\psi_i}{\psi_i} \propto \1_2$.
 Then, the reduced density matrix in this case $\sigma_S \propto \1_{\floor{\frac{\ncl}{2}}+1}= \1_{\floor{\frac{n}{2}}}$.  
   \end{itemize} 
 \end{proof}

\section{Distribution of codewords of an MDS code into subsets forming MDS codes with smaller parameters and corresponding AME states}\label{Appendix-MDS-subsets}
Here we show how for $\ncl \leq \dloc$ it is possible to find MDS codes $\C=[\ncl , \ceil{\ncl /2}]_\dloc$ whose codewords can be distributed into $q^2$ subsets each forming MDS codes $\C_i =[\ncl , \ceil{\ncl/2}-2]_\dloc$.
Then, we present the AME states $\AME(7,4)$, $\AME(19,17)$, and $\AME(21,19)$.

First, we describe how the MDS code $\C=[\ncl , \ceil{\ncl /2}]_\dloc$, for $\ncl \leq \dloc$ can be obtained.
In order to do so, we restrict our analysis to the biggest size $\C=[\dloc , \ceil{\dloc /2}]_\dloc$, as the other codes can be constructed in the same way.
As we discussed in Section A, in general, to find an MDS code $[n,k]_\dloc$ we need to provide a suitable generator matrix $G_{k \times n} =[\1_k |A]$.
To do that, we first recall the concept of the so-called \emph{Singleton arrays}.
Singleton arrays $S_\dloc$ represent a special case of Cauchy matrices \cite[Chapter~11]{MacWilliams} and have the property that all its square sub-matrices are non-singular.
It is known that for any finite field $GF(\dloc)$, a Singleton array of size $\dloc$ can be found as (see \cite{Roth} and Table. A2 of \cite{ZCAA})
 \begin{equation}
  S_\dloc \coloneqq \begin{array}{ccccccc}
  1 & 1 & 1 & \ldots & 1 & 1 & 1\\
  1 & a_1 & a_2 & \ldots & a_{\dloc-3} & a_{\dloc-2} &  \\
  1 & a_2 & a_3  & \ldots & a_{\dloc-2} &   &  \\
  \vdots & \vdots & \vdots & \iddots &  &   &  \\
  1 & a_{\dloc-3} & a_{\dloc-2} &   &   &   &    \\
  1 & a_{\dloc-2} &  &   &   &   &   \\
  1 &   &  &   &   &   &    \\ 
 \end{array}, \label{sq}
 \end{equation}
with
\begin{equation}
  a_i \coloneqq \frac{1}{1-\gamma^i} .
\end{equation}
where $\gamma$ is an element of $GF(\dloc)$ called \emph{primitive element}. 
All the non-zero elements of $GF(\dloc)$ can be written as some integer power of $\gamma$. 
It is also known that by taking a rectangular sub-matrix $A$ of size $k \times (n-k)$ of $S_\dloc$ one can construct a suitable generator of an MDS code \cite[Chapter~11]{MacWilliams} \cite{ZCAA}.
\begin{theorem}  \label{Corollary:MDS}
 Let $G_{k \times n} =[\1_k | A_{k \times (n-k)}]$ be the generator matrix of a code $\C$ with parameters $[n,k,\dH]_\dloc$. The following statements are equivalent:
  \begin{itemize}
   \item[(i)] $\C$ is MDS.
   \item[(ii)]every square submatrix of $A$ is nonsingular.
   \item[(iii)] any $k$ column vectors of $G_{k\times n} = [\1 |A]$ are linearly independent.
   \item[(iv)] any $n-k$ column vectors of $H_{(n-k)\times n} = [-A^T |\1]$ are linearly independent.
 \end{itemize}
\end{theorem}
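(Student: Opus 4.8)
The plan is to establish the four-way equivalence through three links, $\mathrm{(i)}\Leftrightarrow\mathrm{(iv)}$, $\mathrm{(iv)}\Leftrightarrow\mathrm{(ii)}$ and $\mathrm{(ii)}\Leftrightarrow\mathrm{(iii)}$, which together connect all four statements. Throughout I would use that $G_{k\times n}=[\1_k|A]$ has full rank $k$ and that $H_{(n-k)\times n}=[-A^T|\1_{n-k}]$ satisfies $GH^T=0$ with $\rank H=n-k$, so that $H$ is a parity-check matrix for $\C$; and the standard fact that the minimum distance of a linear code equals the least number of columns of any of its parity-check matrices that are linearly dependent (a nonzero codeword of weight $w$ is exactly a linear dependence among the $w$ columns of $H$ in its support).

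For $\mathrm{(i)}\Leftrightarrow\mathrm{(iv)}$: by the fact just recalled, $\dH$ is the smallest $m$ such that some $m$ columns of $H$ are linearly dependent. The Singleton bound~\eqref{eq:hammingdistance} gives $\dH\le n-k+1$ unconditionally, so $\C$ is MDS, i.e.\ $\dH=n-k+1$, exactly when no set of $n-k$ columns of $H$ is linearly dependent. Since a superset of a dependent set is dependent, this is equivalent to the absence of a dependent column set of size at most $n-k$, i.e.\ to every $n-k$ columns of $H$ being linearly independent, which is $\mathrm{(iv)}$.

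For $\mathrm{(iv)}\Leftrightarrow\mathrm{(ii)}$ and $\mathrm{(ii)}\Leftrightarrow\mathrm{(iii)}$: the mechanism is Laplace expansion along the identity block. Fix $k$ columns of $G=[\1_k|A]$, with $t$ of them chosen from the $\1_k$ block, at positions forming $T\subseteq\{1,\dots,k\}$, and $k-t$ chosen from the $A$ block. Expanding the corresponding $k\times k$ determinant successively along the selected standard-basis columns shows it equals, up to sign, the $(k-t)\times(k-t)$ minor of $A$ obtained by deleting the rows indexed by $T$ and keeping the $k-t$ selected columns; hence these $k$ columns of $G$ are linearly independent iff that minor of $A$ is nonzero. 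As the $k$-subset of columns of $G$ ranges over all choices, $t$ ranges over its admissible values and the resulting minors of $A$ exhaust all square submatrices of $A$ of every order up to $\min(k,n-k)$ (the case $t=k$ being vacuous), giving $\mathrm{(iii)}\Leftrightarrow\mathrm{(ii)}$. Running the identical computation on $H=[-A^T|\1_{n-k}]$ — whose non-identity block is $-A^T$, so that its square minors are, up to sign, transposes of square minors of $A$ and vanish precisely when the corresponding minors of $A$ do — yields $\mathrm{(iv)}\Leftrightarrow\mathrm{(ii)}$, closing the chain.

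This is the classical characterization of MDS codes~\cite[Chapter~11]{MacWilliams}, so no step is genuinely deep; the only place demanding care is the bookkeeping in the previous paragraph, namely checking that the minors of $A$ generated by the Laplace expansion, as the column subset varies, really exhaust all square submatrices of $A$ with matching (non)singularity, and tracking signs so that ``columns independent'' corresponds cleanly to ``minor nonzero'' with no spurious cancellation. As a sanity check one can instead prove $\mathrm{(iii)}\Rightarrow\mathrm{(i)}$ directly: a nonzero codeword $\vec{x}G$ of weight at most $n-k$ vanishes on some set $T$ of $k$ coordinates, so $\vec{x}\,G_T=0$ where $G_T$ is the $k\times k$ submatrix of $G$ on the columns $T$; nonsingularity of $G_T$ forces $\vec{x}=0$, a contradiction, so every nonzero codeword has weight at least $n-k+1$ and $\C$ saturates the Singleton bound.
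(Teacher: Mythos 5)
Your proof is correct. Note, however, that the paper does not actually prove this theorem: it states it as a known classical fact and defers to \cite[Chapter~11]{MacWilliams}, so there is no in-paper argument to compare against. What you have written is precisely the standard textbook proof from that reference --- the identification of $\dH$ with the minimal number of dependent columns of the parity-check matrix $H=[-A^T|\1]$ for the link $\mathrm{(i)}\Leftrightarrow\mathrm{(iv)}$, and Laplace expansion along the identity-block columns to match $k\times k$ minors of $G$ (respectively $(n-k)\times(n-k)$ minors of $H$) with square minors of $A$ for the remaining links --- and your bookkeeping (the range of minor sizes exhausting all square submatrices of $A$, the sign being irrelevant to nonsingularity, and the direct check $\mathrm{(iii)}\Rightarrow\mathrm{(i)}$) is sound.
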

For $\dloc$ being an odd prime power dimension, the biggest submatrix $A$ has size $\ceil{\dloc/2} \times \ceil{\dloc/2}$.
Using this construction, the biggest generator matrix $G_{\ceil{\dloc/2} \times {(\dloc +1)}} = [\1_{\ceil{\dloc/2}} | A]$ has size $\ceil{\dloc/2} \times {(\dloc +1)}$, and equivalently the MDS code has parameters $C=[\dloc+1, \ceil{\dloc/2}]_\dloc$.

Starting with the obtained code, there are several simple modifications to produce new codes with different parameters from the old one.
One of these manipulations is called \emph{puncturing} \cite{Gottesman-thesis,Gottesman-pastingcodes,Calderbank}, where from a given linear code $[n,k,\dH]_\dloc$ one can obtain a new code $[n-1,k,\dH-1]_\dloc$ by deleting one coordinate.
Considering this, we start from an MDS code $C=[\dloc+1, \ceil{\dloc/2}]_\dloc$ and generator matrix $G_{\ceil{\dloc/2} \times {(\dloc +1)}} = [\1_{\ceil{\dloc/2}} | A]$, by puncturing we get
  \begin{equation}\label{eq:non-standard-G}
  G_{\ceil{\dloc/2} \times \dloc}  =\left[  \begin{array}{ccc|c}
 & \1_{\ceil{\dloc/2}-1} & &   \\
  & & &  A_{\ceil{\dloc/2}\times \ceil{\dloc/2}} \\
 0 & \dots & 0 &  \\
 \end{array} \right] .
 \end{equation}
This generator matrix is not in the standard form but it constructs the MDS code $\C=[\dloc , \ceil{\dloc/2}]_\dloc$.

The second step is showing that codewords of the constructed MDS code $\C=[\dloc , \ceil{\dloc/2}]_\dloc$ distribute into subsets forming MDS codes $\C_i =[\dloc, \ceil{\dloc}-2]_\dloc$.
In order to do this, we first discuss another method of constructing new codes from old codes, called \emph{shortening} \cite{Gottesman-thesis,Gottesman-pastingcodes,Calderbank}.
Following this method, starting from a code $[n,k,\dH]_\dloc$, and by taking an appropriate subcode after deleting one coordinate, a code with parameters $[n-1,k-1,\dH'\geq\dH]_\dloc$ can be constructed.
This propagation rule will be useful several times in what follows.
We will take an appropriate subcode by choosing the codewords which have all the same value in the deleted coordinate, for instance 0. Thanks to this, all the differences between codewords must be in the coordinates that we did not delete, and thus the Hamming distance cannot decrease, $\dH'\geq\dH$.

We first show the existence of a subset $\C_0$ and then we will discuss the rest of the subsets.
We define the matrix $Q$ as
  \begin{equation}
  Q_{\ceil{\dloc/2} \times 2}  \coloneqq \left[  \begin{array}{cc}
   1 & 0\\
   a_{\ceil{\dloc/2}} & 0\\
   \vdots & \vdots \\
   a_{\dloc-2} & 0 \\
   0 & 1 \\
 \end{array} \right] ,
 \end{equation}
that contains two columns, called $Q_1$ and $Q_2$.
The $\ceil{\dloc/2}-1$ elements of $Q_1$ are exactly the same as for the $\left(\ceil{\dloc/2}+1\right)$-th column of the Singleton array $S_\dloc$. 
The biggest rectangular submatrix of the singleton array $S_\dloc$ is used to construct the generator matrix $G_{\ceil{\dloc/2} \times \dloc}$, Eq.~\eqref{eq:non-standard-G}, and the $\left(\ceil{\dloc/2}+1\right)$-th column contains $\ceil{\dloc/2}-1$ many elements that we used as $Q_1$ (we added a zero for the last element).
The column $Q_2$ is the only column of the matrix $\1_{\ceil{\dloc/2}}$ that is missing in $G_{\ceil{\dloc/2}\times \dloc}$.
Now, let's consider the following matrix
 \begin{equation}
 \left[ G|Q \right] _ {\ceil{\dloc/2} \times (\dloc+2)} =
 \left[  \begin{array}{ccc|c|cc}
    & & &  & 1 & 0\\
   & \1_{\ceil{\dloc /2}-1} & & & a_{\ceil{\dloc/2}} & 0 \\
   & & & A_{\ceil{\dloc/2} \times \ceil{\dloc/2} } & \vdots & \vdots \\
    & & & &  a_{\dloc -2} & 0 \\
    0 & \ldots & 0 & & 0 & 1 \\
 \end{array} \right] .
 \end{equation}
$G$ is the generator matrix of the MDS code $\C=[\dloc , \ceil{\dloc/2}]_\dloc$. The matrix $[G|Q]$ does not define an MDS code, Theorem \ref{Corollary:MDS} can show that its parameters are $\C=[\dloc+2 , \ceil{\dloc/2}, \floor{\dloc/2}+2]_\dloc$.
Now, we repeat the shortening process two times to get the subset $\C_0$. Every time we remove one of the last  two columns of the $[G|Q]$ matrix because $G$ is the generator matrix of the code $\C$ and we are looking for a right set of its codewords to form the code $\C_0$.
After one step shortening, removing the last row and the last column of the $[G|Q]$ matrix, we get 
\begin{equation}
\tilde{G}_{(\ceil{\dloc/2}-1)\times (\dloc +1)}= \left[  \begin{array}{c|c}
 \1_{\ceil{\dloc/2}-1}
  & 
    \begin{matrix}
   1 &  1 & \dots & 1  \\
   1 & a_1 & \dots & a_{\ceil{\dloc/2}} \\
   \vdots & \vdots & \ddots & \vdots \\
   1 & a_{\ceil{\dloc/2}-2} & \dots & a_{\dloc -2} 
  \end{matrix}
  \\
 \end{array}\right] .
\end{equation}
Theorem \ref{Corollary:MDS} tells us that the above matrix is the generator matrix of an MDS code with parameters $[\dloc+1, \ceil{\dloc/2}-1]_\dloc$. To perform the shortening process for the second time we need to find the right combination of rows of the generator matrix.
To that end we define the following matrix 
 \begin{equation}
 C_{(\ceil{\dloc/2}-1) \times (\ceil{\dloc/2}-1)} \coloneqq \left[  \begin{array}{ccc|c}
   &  & & 1  \\
   & \text{{\large $\1_{\ceil{\dloc/2}-2}$}}  & & a_{\ceil{\dloc/2}}   \\
   & & & \vdots\\
  0 & \dots & 0 & a_{\dloc -2}
 \end{array}\right]
 \end{equation}
We perform the $C^{-1}$ matrix on the generator matrix $\tilde{G}$ to get the right combination of the rows of the generator matrix to do the shortening process.
We get
\begin{align} 
C^{-1}  \tilde{G} =
   \left[  \begin{array}{ccc|ccc|c}
   & & & & & & 0 \\
   & \text{{\large $\1_{\ceil{\dloc/2}-2}$}} & & & \text{{\large $C^{-1}B$}} & & \vdots \\
   & & & & & & 0 \\
   0 & \dots & 0 & & & & 1
   \end{array}\right] ,
\end{align}
where $B$ is a submatrix of $\tilde{G}$ 
 \begin{equation}
 B_{(\ceil{\dloc/2}-1) \times (\ceil{\dloc/2}+1)} \coloneqq \left[  \begin{array}{ccccc}
   0 & 1 &  1 & \dots & 1  \\
   0 & 1 & a_1 & \dots & a_{\ceil{\dloc/2}-1} \\
   \vdots & \vdots & \vdots & \ddots & \vdots \\
    0 & 1 & a_{\ceil{\dloc/2}-3} & \dots & a_{\dloc -4} \\
   1 & 1 & a_{\ceil{\dloc/2}-2} & \dots & a_{\dloc -3} 
 \end{array}\right] .
 \end{equation}
Now, the matrix $C^{-1}\tilde{G}$ is presented in a form in which the rows are in the right combination to easily perform the shortening process.
By removing the last row and the last column we get the following matrix of size$(\ceil{\dloc/2}-2) \times \dloc$
 \begin{equation}
  G_0 =
  \left[  \begin{array}{ccc|ccc}
   & & & & & \\
   & \text{{\large $\1_{\ceil{\dloc/2}-2}$}} & & & \text{\large D}  & \\
   & & & & &  
   \end{array}\right] ,
 \end{equation}
where $D_{(\ceil{\dloc/2}-2) \times (\ceil{\dloc/2}+1)} = C^{-1}B$ removing the bottom row, and $G_0$ is the generator matrix of the shortened code, $\C_0$. We performed a shortening that keeps or grows the Hamming distance. 
Since we started with a MDS code $[\dloc+1, \ceil{\dloc/2}-1]_\dloc$, thus the shortened code is an MDS code $\C_0 = [\dloc, \ceil{\dloc/2}-2]_\dloc$. It is in fact easy to check that the Singleton bound continues to saturate.
Moreover, one verifies that the generator matrix $G_0$ is a linear combination of the rows of the generator matrix $G_{\ceil{\dloc/2} \times \dloc}$, Eq.~\eqref{eq:non-standard-G}.
This implies that the codewords of MDS code $\C_0$ are a subset of the codewords of the original MDS code $\C =[\dloc, \ceil{\dloc/2}]_\dloc$. 

It remains to show that all of the codewords of the MDS code $\C$ can be distributed into subsets each forming $\C_i=[\dloc , \ceil{\dloc/2}-2]_\dloc$.
So far we were able to show that $\dloc^{\ceil{\dloc/2}-2}$ of its codewords distribute into an MDS code with parameters $\C_0=[\dloc, \ceil{\dloc/2}-2]_\dloc$.
The fact that both MDS codes $\C$ and $\C_0$ are linear codes implies the existence of the other subsets.
Each of these subsets $\C_i$ can be achieved by adding a different codeword $\vec c_i$ of code $\C$ that is not inside the code $\C_0$ to all the codewords of code $\C_0$. 
 
Now let's consider some examples. 
The Cl+Q with repetition produces two unknown AME states, $\AME(19,17)$ and $\AME(21,19)$, as well as provides a close formula for other known AME state like $\AME(7,4)$.
The state $\AME(7,4)$ can be constructed by using MDS code with parameters $[5,3,3]_4$ and showing that all the terms can be divided into $4^2$ subgroups each forming an MDS code $[5,1,5]_4$.
Thus, the following closed form expression is an $\AME(7,4)$
 \begin{equation} \label{eq:AME(7,4)}
  \ket{\phi} = \sum_{i,j,l \in GF(4)} \ket{i, j, l, i+j+l, i+xj+(1+x)l} \otimes \ket{\varphi_{\alpha \beta}} \ ,
 \end{equation}
 where $ \varphi_{\alpha \beta}$ represents one of the Bell states such that $\alpha =i+j$, $\beta =i+xl$ over finite field $GF(4) = \{0,1,x,1+x\}$ generated by $x^2 = x+1$.
The detailed description of the subcodes $[5,1,5]_4$ connected to the Bell states $\varphi_{\alpha \beta}$ are presented in Table~\ref{table:AME(7,4)}.
Note that, in order to achieve the AME state, it is important to have different Bell states for different subclasses but the pattern of the states is not important.
 
For the other two states,  $\AME(19,17)$ and $\AME(21,19)$ we can only provide the closed form expressions of the AME states $\ket \phi$ with the $G$ and $Q$ matrices 
\begin{equation} \label{eq:newAMEstates}
 \ket{\phi} = \sum_{\vec{v}\in GF(\dloc)^{\ceil{\dloc/2}}}
 \ket{\vec{v} G} \, \ket{\psi_{\vec{v}Q}} \ ,
\end{equation}
with 
 \begin{equation}
   \ket{\psi_{\vec{v}Q}} 
 = 
   X^{\vec{v}Q_1} \otimes Z^{\vec{v}Q_2} \sum_{l = 0}^{\dloc-1} \ket{l,l}\ .
 \end{equation}
 The $G$ and $Q$ matrices to construct $\AME(19,17)$ are as follows
\begin{equation}
    G = \left[\begin{array}{ccc|ccccccccc}
 & & & 1 & 1 & 1 & 1 & 1 & 1 & 1 & 1 & 1 \\
 & & & 1 & 8 & 2 & 15 & 7 & 4 & 6 & 5 & 9 \\
 & & & 1 & 2 & 15 & 7 & 4 & 6 & 5 & 9 & 13 \\
 & \text{{\large $\1_{8 \times 8}$}}  & & 1 & 15 & 7 & 4 & 6 & 5 & 9 & 13 & 12 \\
 & & & 1 & 7 & 4 & 6 & 5 & 9 & 13 & 12 & 14 \\
 & & & 1 & 4 & 6 & 5 & 9 & 13 & 12 & 14 & 11 \\
 & & & 1 & 6 & 5 & 9 & 13 & 12 & 14 & 11 & 3 \\
 & & & 1 & 5 & 9 & 13 & 12 & 14 & 11 & 3 & 16 \\
0 & \dots & 0 & 1 & 9 & 13 & 12 & 14 & 11 & 3 & 16 & 10 \\
\end{array}\right]  \ ,
\end{equation}
and 
\begin{equation} 
    Q = \left[\begin{array}{*{2}c}
1 & 0 \\
13 & 0 \\
12 & 0 \\
14 & 0 \\
11 & 0 \\
3 & 0 \\
16 & 0 \\
10 & 0 \\
0 & 1 \\
\end{array}\right] \ .
\end{equation}
To produce the state $\AME(21,19)$ the $G$ and $Q$ matrices are
\begin{equation}
    G = \left[\begin{array}{ccc|cccccccccc}
 &  &  &  1 & 1 & 1 & 1 & 1 & 1 & 1 & 1 & 1 & 1 \\
 &  &  &  1 & 18 & 6 & 8 & 5 & 11 & 3 & 16 & 7 & 10 \\
 &  &  &  1 & 6 & 8 & 5 & 11 & 3 & 16 & 7 & 10 & 13 \\
 &  &  &  1 & 8 & 5 & 11 & 3 & 16 & 7 & 10 & 13 & 4 \\
 &  \text{{\large $\1_{9 \times 9}$}}  & & 1 & 5 & 11 & 3 & 16 & 7 & 10 & 13 & 4 & 17 \\
 &  &  &  1 & 11 & 3 & 16 & 7 & 10 & 13 & 4 & 17 & 9 \\
 &  &  &  1 & 3 & 16 & 7 & 10 & 13 & 4 & 17 & 9 & 15 \\
 &  &  &  1 & 16 & 7 & 10 & 13 & 4 & 17 & 9 & 15 & 12 \\
 &  &  &  1 & 7 & 10 & 13 & 4 & 17 & 9 & 15 & 12 & 14 \\
 0 & \dots & 0 & 1 & 10 & 13 & 4 & 17 & 9 & 15 & 12 & 14 & 2 \\
\end{array}\right] \ ,
\end{equation}
and
\begin{equation}
    Q = \left[\begin{array}{*{2}c}
1 & 0 \\
13 & 0 \\
4 & 0 \\
17 & 0 \\
9 & 0 \\
15 & 0 \\
12 & 0 \\
14 & 0 \\
2 & 0 \\
0 & 1 \\
\end{array}\right] \ .
\end{equation}
Both $\AME$ states are constructed using $G$ matrices that generate MDS codes $[\dloc , \ceil{\dloc /2}]_\dloc$, for $\dloc = 17$ or $19$ respectively, whose codewords are distributed into subsets each forming MDS codes $[\dloc , \ceil{\dloc/2}-2]_\dloc$. 
We found the right combination of the MDS codes, or alternatively the $G$ and $Q$ matrices, using a Python code.

\newpage
 \begin{table}
\begin{center}
\begin{tabular}{ |cccccc|cccccc| } 
 \hline
   $\hphantom{a}$0$\hphantom{a}$ & $\hphantom{a}$0$\hphantom{a}$ & $\hphantom{a}$0$\hphantom{a}$ & $\hphantom{a}$0$\hphantom{a}$ & $\hphantom{a}$0$\hphantom{a}$ & $\hphantom{a}$ 
   & $\hphantom{a}$0$\hphantom{a}$& $\hphantom{a}$0$\hphantom{a}$& $\hphantom{a}$3$\hphantom{a}$ & $\hphantom{a}$3$\hphantom{a}$ & $\hphantom{a}$2$\hphantom{a}$& \\
   1 & 1 & 3 & 3 & 1 &
   &1 & 1 & 0 & 0 & 3 & \\
   2 & 2 & 1 & 1 & 2 & 
   &2 & 2 & 2 & 2 & 0 & \\
   3 & 3 & 2 & 2 & 3 & $\varphi_{00}$
   &3 & 3 & 1 & 1 & 1 & $\varphi_{01}$\\
    \hline
    0 & 0 & 1 & 1 & 3 & 
    &0 & 0 & 2 & 2 & 1 & \\
    1 & 1 & 2 & 2 & 2 &
    &1 & 1 & 1 & 1 & 0 &\\
    2 & 2 & 0 & 0 & 1 & 
    &2 & 2 & 3 & 3 & 3 &\\
    3 & 3 & 3 & 3 & 0 & $\varphi_{02}$
    &3 & 3 & 0 & 0 & 2 &  $\varphi_{03}$ \\
    \hline 
    1 & 0 & 3 & 2 & 3 &
    &1 & 0 & 0 & 1 & 1 & \\
    0 & 1 & 0 & 1 & 2 &
    &0 & 1 & 3 & 2 & 0 &\\
    3 & 2 & 2 & 3 & 1 &
    &3 & 2 & 1 & 0 & 3 &\\
    2 & 3 & 1 & 0 & 0 & $\varphi_{10}$ 
    &2 & 3 & 2 & 3 & 2 & $\varphi_{11}$\\
    \hline
    1 & 0 & 2 & 3 & 0 &
    &1 & 0 & 1 & 0 & 2 &\\
    0 & 1 & 1 & 0 & 1 &
    &0 & 1 & 2 & 3 & 3 & \\
    3 & 2 & 3 & 2 & 2 &
    &3 & 2 & 0 & 1 & 0 &\\
    2 & 3 & 0 & 1 & 3 & $\varphi_{12}$ 
    & 2 & 3 & 3 & 2 & 1 & $\varphi_{13}$ \\
    \hline
    2 & 0 & 1 & 3 & 1 &
    &2 & 0 & 2 & 0 & 3 &\\
    3 & 1 & 2 & 0 & 0 &
    &3 & 1 & 1 & 3 & 2 &\\
    0 & 2 & 0 & 2 & 3 &
    &0 & 2 & 3 & 1 & 1 &\\
    1 & 3 & 3 & 1 & 2 & $\varphi_{20}$ 
    &1 & 3 & 0 & 2 & 0 & $\varphi_{21}$ \\
    \hline
    2 & 0 & 0 & 2 & 2 &
    &2 & 0 & 3 & 1 & 0 &\\
    3 & 1 & 3 & 1 & 3 &
    &3 & 1 & 0 & 2 & 1 &\\
    0 & 2 & 1 & 3 & 0 &
    &0 & 2 & 2 & 0 & 2 & \\
    1 & 3 & 2 & 0 & 1 & $\varphi_{22}$ 
    &1 & 3 & 1 & 3 & 3 & $\varphi_{23}$ \\
    \hline 
    3 & 0 & 2 & 1 & 2 &
    &3 & 0 & 1 & 2 & 0 &\\
    2 & 1 & 1 & 2 & 3 &
    &2 & 1 &2 & 1 & 1 & \\
    1 &2 & 3 & 0 & 0 &
    &1 & 2 & 0 & 3 & 2 &\\
    0 & 3 & 0 & 3 & 1 & $\varphi_{03}$ 
    &0 & 3 & 3 & 0 & 3 & $\varphi_{31}$ \\
    \hline 
    3 & 0 & 3 & 0 & 1 &
    &3 & 0 & 0 & 3 & 3 &\\
    2 & 1 & 0 & 3 & 0 &
    &2 & 1 & 3 & 0 & 2 &\\
    1 & 2 & 2 & 1 & 3 &
    &1 & 2 & 1 & 2 & 1 &\\
    0 & 3 & 1 & 2 & 2 &$\varphi_{32}$ 
    &0 & 3 & 2 & 1 & 0 & $\varphi_{33}$ \\
    \hline  
\end{tabular}
\end{center}
 \caption{\label{table:AME(7,4)} Codewrods of MDS code $[5,3,3]_4$ are distributed into $\dloc^2=16$ subsets $[5,1,5]_4$.  $\AME(7,4)$, Eq.~\eqref{eq:AME(7,4)}, formed by concatenating codewords of one subset to one of the Bell states. }
\end{table}


\begin{thebibliography}{99}


\bibitem{Hillery}
M.~Hillery, V.~Bu\v{z}ek, and A.~Berthiaume, "Quantum secret sharing",
Phys. Rev. A, \textbf{59}, 1829, (1999).

\bibitem{Gottesman-thesis}
D.~Gottesman,
"Stabilizer codes and quantum error correction",
arXiv:quant-ph/9705052.

\bibitem{Scott2004}
A.~J. Scott, 
"Multipartite entanglement, quantum-error-correcting  codes, and entangling power of quantum evolutions", 
Phys. Rev. A, \textbf{69}, 052330, (2004).

\bibitem{Gottesman}
D.~Gottesman, 
"An Introduction to Quantum Error Correction and Fault-Tolerant Quantum Computation",
arXiv:0904.2557 [quant-ph].

\bibitem{Latorre}
J.~I. Latorre, G.~Sierra, "Holographic codes" arXiv:1502.06618 [quant-ph].

\bibitem{Preskill}
F.~Pastawski, B.~Yoshida, D.~Harlow, J.~Preskill, "Holographic quantum error-correcting codes: Toy models for the bulk/boundary correspondence" JHEP, \textbf{06}, 149, (2015).

\bibitem{Hein}
M.~Hein, J.~Eisert, H.~J. Briegel,
"Multi-party entanglement in graph states",
Phys. Rev. A \textbf{69}, 062311 (2004).

\bibitem{Hein-2006}
M.~Hein, W.~D\"{u}r, J.~Eisert, R.~Raussendorf, M.~Van den Nest, H.~J. Briegel, "Entanglement in Graph States and its Applications", arXiv:quant-ph/0602096.

\bibitem{Orus}
R. Or\'us, ``A practical introduction to tensor networks: Matrix product states and projected entangled pair states", Ann. Phys. \textbf{349}, 117 (2014).

\bibitem{Dardo-Karol}
D.~Goyeneche, K.~\.{Z}yczkowski, "Genuinely multipartite entangled states and orthogonal arrays" Phys. Rev. A \textbf{90}, 022316, (2014).

\bibitem{Arnau}
D. Goyeneche, D. Alsina, J. I Latorre, A. Riera, and K. \.{Z}yczkowski,
"Absolutely maximally entangled states, combinatorial designs, and multiunitary matrices", 
Phys. Rev. A, \textbf{92}, 032316,(2015).

\bibitem{Dardo}
D.~Goyeneche, Z.~Raissi, S.~Di Martino, K.~\.{Z}yczkowski,
"Entanglement and quantum combinatorial designs",
Phys. Rev. A \textbf{97}, 062326 (2018).

\bibitem{Helwig-graphstates}
W.~Helwig
"Absolutely Maximally Entangled Qudit Graph States",
arXiv:1306.2879 [quant-ph]

\bibitem{ZCAA}
Z. Raissi, C. Gogolin, A. Riera, and A. Ac\'{i}n, 
"Constructing optimal quantum error correcting codes from absolute maximally entangled states",  
J. Phys A: Math. and Theor. \textbf{51}, 075301 (2017), arXiv:1701.03359

\bibitem{MacWilliams}
F.~J. MacWilliams, N.~J. A. Sloane, 
"The Theory of Error Correcting Codes",
North-Holland, Amesterdam (1977).

\bibitem{Singleton}
R. Singleton,
"Maximum distance q-nary codes", 
IEEE Trans. Inf. Theor., \textbf{10}, 116, (2006).

\bibitem{Roth}
R. M. Roth, and G. Seroussi, 
"On generator matrix of MDS codes",
IEEE Trans. Inform. Theory, vol. IT-\textbf{31}, pp. 826–830, (1985).

\bibitem{rank} 
J.~Eisert and H.~J. Briegel, ``The Schmidt Measure as a Tool for Quantifying Multi-Particle Entanglement", Phys. Rev. A \textbf{64}, 022306  (2001).

\bibitem{Nest}
M.~Van den Nest, J.~Dehaene, B.~De Moor,
"Graphical description of the action of local Clifford transformations on graph states", 
Phys. Rev. A \textbf{69}, 022316 (2004).

\bibitem{Bahramgiri}
M.~Bahramgiri and S.~Beigi, 
"Graph states under the action of local clifford
group in non-binary case",
arXiv:quant-ph/0610267.

\bibitem{Heisenberg-Weyl}
L.~D.~Faddeev, "Discrete Heisenberg-Weyl Group and Modular Group", 
Lett. Math. Phys. \textbf{34}, 249 (1995).

\bibitem{Dani}
D.~Alsina, and M.~Razavi, "Absolutely maximally entangled states, quantum maximum distance separable codes, and quantum repeaters", arXiv:1907.11253 [quant-ph]. 

\bibitem{Markus-emails}
 The quantum code $[\![7,0,4]\!]_4$ has been found by random search, and it implies the existence of $\AME(7,4)$ state. We are not aware of a reference for such a code in the literature.

\bibitem{Markus-Roetteler}
M.~Grassl, M.~Roetteler, "Quantum MDS Codes over Small Fields", in Proc. IEEE Int. Symp. Inf. Theory, Hong Kong, pp. 1104–1108 (2015).

\bibitem{AMEtable}
F. Huber and N. Wyderka,
"Table of AME states",
Online available at \href{http://www.tp.nt.uni-siegen.de/+fhuber/ame.html}{http://www.tp.nt.uni-siegen.de/+fhuber/ame.html}

\bibitem{Felix}
F.~Huber, C.~Eltschka, J.~Siewert, and O.~G\"{u}hne, 
"Bounds on absolutely maximally entangled states from shadow inequalities, and the quantum MacWilliams identity", 
J. Phys. A: Math. Theor. \textbf{51}, 175301 (2018).

\bibitem{Gottesman-pastingcodes}
D. Gottesman, “Pasting quantum codes,” quant-ph/9607027 (1996).

\bibitem{Calderbank}
A.~R.~Calderbank, E.~M.~Rains, P.~W.~Shor, and N.~J.~A.~Sloane, “Quantum error correction via codes over $GF(4)$”, quant-ph/9608006 (1996).


\end{thebibliography}
\end{document}